\date{}
\newcommand{\mbf}{\boldmath}
\newcommand{\smbf}{\small \boldmath}
\def\b0{{\mbox {\mbf $0$}}}
\def\bh{{\mbox {\mbf $h$}}}
\def\bq{{\mbox {\mbf $q$}}}
\def\bs{{\mbox {\mbf $s$}}}
\def\bx{{\mbox {\mbf $x$}}}
\def\bv{{\mbox {\mbf $v$}}}
\def\by{{\mbox {\mbf $y$}}}
\def\bz{{\mbox {\mbf $z$}}}
\def\bH{{\mbox {\mbf $H$}}}
\def\bU{{\mbox {\mbf $U$}}}
\def\bI{{\mbox {\mbf $I$}}}
\def\bS{{\mbox {\mbf $S$}}}
\def\sbS{{\mbox {\smbf $S$}}}
\def\bb0{{\mathbf{0}}}
\def\b_beta{\mbox{\boldmath $\beta$}}
\def\hb_beta{\mbox{\boldmath $\hat \beta$}}
\newtheorem{theorem}{Proposition}
\newtheorem{Lemma}{Lemma}
\newtheorem{Remark}{Remark}
\newtheorem{definition}{Definition}
\begin{document}
\title{On the Relationship Between the Multi-antenna Secrecy Communications and Cognitive Radio Communications}
\author{Lan Zhang, Rui Zhang, Ying-Chang Liang, Yan Xin, and Shuguang Cui
\footnote{L. Zhang is with the Department of Electrical and Computer
Engineering, National University of Singapore, Singapore 118622
(email: zhanglan@nus.edu.sg).} \footnote{R. Zhang and Y.-C. Liang
are with the Institute for Infocomm Research, A*STAR, Singapore
(emails: \{rzhang, ycliang\}@i2r.a-star.edu.sg).} \footnote{Y. Xin
is with the NEC Laboratories America, Princeton, New Jersey, USA
(email: yanxin@nec-labs.com).}\footnote{S. Cui is with the
Department of Electrical and Computer Engineering, Texas A\&M
University, College Station, Texas, USA (email: cui@ece.tamu.edu).}}

\maketitle \thispagestyle{fancy} \vspace{-0in} {\linespread{1.50}
\begin{abstract}
This paper studies the capacity of the multi-antenna or
multiple-input multiple-output (MIMO) secrecy channels with multiple
eavesdroppers having single/multiple antennas. It is known that the
MIMO secrecy capacity is achievable with the optimal transmit
covariance matrix that maximizes the minimum difference between the
channel mutual information of the secrecy user and those of the
eavesdroppers. The MIMO secrecy capacity computation can thus be
formulated as a non-convex max-min problem, which cannot be solved
efficiently by standard convex optimization techniques. To handle
this difficulty, we explore a relationship between the MIMO secrecy
channel and the recently developed MIMO cognitive radio (CR)
channel, in which the multi-antenna secondary user transmits over
the same spectrum simultaneously with multiple primary users having
single/multiple antennas, subject to the received interference power
constraints at the primary users, or the so-called ``interference
temperature (IT)'' constraints. By constructing an auxiliary CR MIMO
channel that has the same channel responses as the MIMO secrecy
channel, we prove that the optimal transmit covariance matrix to
achieve the secrecy capacity is the same as that to achieve the CR
spectrum sharing capacity with properly selected IT constraints
under certain conditions. Based on this relationship, several
algorithms are proposed to solve the non-convex secrecy capacity
computation problem by transforming it into a sequence of CR
spectrum sharing capacity computation problems that are convex. For
the case with single-antenna eavesdroppers, the proposed algorithms
obtain the exact capacity of the MIMO secrecy channel, while for the
case with multi-antenna eavesdroppers, the proposed algorithms
obtain both upper and lower bounds on the MIMO secrecy capacity.
\end{abstract}
}

\begin{keywords}
Cognitive radio, convex optimization, CR spectrum sharing capacity,
interference temperature, multi-antenna systems, secrecy capacity.
\end{keywords}


\section{Introduction}

In the 1970s, Wyner introduced a secrecy transmission model in his
seminal work \cite{Wyner:wiretap75} on information-theoretic
secrecy. In this model, the secrecy transmitter sends confidential
messages to a legitimate receiver subject to the requirement that
the messages cannot be decoded by an eavesdropper. The
information-theoretic study of the secrecy transmission problem has
been continued and extended to many other channel models, including
broadcast channels (BCs) \cite{csiszar:BCCM78,ybliang:CRBC07},
multiple access channels (MACs)
\cite{ybliang:MACCM08,ybliang:MACCM06}, and interference channels
(ICs) \cite{Ruoheng08:BCICCM,ybliang:CRIC07}. Very recently, the
secrecy capacity of the multi-antenna/multiple-input multiple-output
(MIMO) channel has been characterized by Khisti and Wornell
\cite{wornell:MISOscr07}, and Oggier and Hassibi
\cite{hassibi:MIMOscr07}. In their work, the MIMO secrecy channel
with a single eavesdropper having multiple antennas is transformed
into a degraded MIMO-BC, whose capacity is an upper bound on the
secrecy capacity. It was shown in
\cite{wornell:MISOscr07,hassibi:MIMOscr07} that this capacity upper
bound is indeed tight for the Gaussian noise case, i.e., the exact
secrecy capacity. However, this computable secrecy capacity cannot
be extended to the general case of multiple eavesdroppers. Moreover,
Liu and Shammai \cite{Tieliu:scr07} also established the MIMO
secrecy capacity by using the channel enhancement technique
\cite{Shamai06:MIMOBCcapacity}. However, no computable
characterization of the secrecy capacity was provided in
\cite{Tieliu:scr07}.

On the other hand, cognitive radio is considered as an efficient
technology to dramatically improve spectrum utilization, thus having
great potential to solve spectrum scarcity problem. In a
spectrum-sharing CR system, the CR user or the so-called secondary
user (SU) is allowed to simultaneously transmit with the licensed
primary user (PU) over the same spectrum, provided that the SU to PU
interference level is regulated below a predefined threshold, which
is also called the ``interference temperature (IT)'' constraint. The
capacity achieving transmission problems under the IT constraint for
the secondary users have been studied in \cite{Liang:jstsp},
\cite{lan07:jsac}, and \cite{lan:icc08} for the CR MIMO
point-to-point channel, the CR MIMO-MAC, and the CR MIMO-BC,
respectively. Since the IT constraint is a linear function of the
transmit covariance matrix, the capacity characterization problem
for the CR MIMO channel can be formulated as a convex optimization
problem, and is thus solvable via the standard interior point method
\cite{Liang:jstsp}. It is worth noting that the system models of the
secrecy channel and the CR channel are fairly similar in the sense
that the secrecy and SU transmitters need to regulate the resultant
signal power level at the eavesdropper and PU, respectively, so as
to achieve the goals of confidential transmission and PU protection,
respectively.

In this paper, we study the capacity computation problem for the
general case of the MIMO secrecy channel with multiple eavesdroppers
having single/multiple antennas. Based on the results in
\cite{hassibi:MIMOscr07,wornell:MISOscr07}, the related MIMO secrecy
capacity can be obtained via optimizing over the transmit covariance
matrix of the secrecy user to maximize the minimum difference
between the mutual information of the secrecy channel and those of
the channels from the secrecy transmitter to different
eavesdroppers. It can thus be shown that the resulting capacity
computation problem is a non-convex max-min optimization problem,
which cannot be solved efficiently with standard convex optimization
techniques. To handle this difficulty, we consider an auxiliary CR
MIMO channel with multiple PUs having single/multiple antennas and
the same channel responses as those in the MIMO secrecy problem. We
next establish a relationship between this auxiliary CR MIMO channel
and the MIMO secrecy channel by proving that the optimal transmit
covariance matrix for the secrecy channel is the same as that for
the CR channel with properly selected IT constraints for the PUs
under certain conditions. Based on such a relationship, we transform
the non-convex MIMO secrecy capacity computation problem into a
sequence of CR capacity computation problems, which are convex and
thus can be efficiently solved. For the case of single-antenna
eavesdroppers, the proposed algorithms obtain the exact capacity of
the associated MIMO secrecy channel, while for the case of
multi-antenna eavesdroppers, the proposed algorithms obtain both the
upper and lower bounds on the MIMO secrecy capacity.

The rest of this paper is organized as follows. Section
\ref{sect:prob} presents the system models and problem formulations
for the CR MIMO transmission and the secrecy MIMO transmission.
Section \ref{sect:mnrlt} describes the main theoretical results of
this paper on the relationship between the secrecy capacity and the
CR spectrum sharing capacity. Section \ref{sect:algrm} studies the
case of single-antenna eavesdroppers, and develops several
algorithms to compute the MIMO secrecy capacity. Section
\ref{sect:m_ante} extends the results to the case of multi-antenna
eavesdroppers to obtain the upper and lower bounds on the MIMO
secrecy capacity. Section \ref{sect:sim} presents some numerical
examples. Finally, Section \ref{sect:cncl} concludes the paper.

{\it Notation}: Uppercase boldface and lowercase boldface letters
are used to denote matrices and vectors, respectively. $(\bS)^{H}$,
$\text{tr}(\bS)$, and $|\bS|$ denote the conjugate transpose, the
trace, and the determinant of a matrix $\bS$, respectively.
$\mathcal{R}^{K}$ denotes the vector space of $K\times 1$ real
vectors, and $\mathcal{R}$ denotes the field of real numbers. $\bI$
denotes an identity matrix. $\mathbb{E}[\cdot]$ denotes statistical
expectation. $|\cdot|$ denotes the absolute value of a complex
number.

\section{System Model and Problem Formulation}\label{sect:prob}

In this section, we present system models and problem formulations
for the CR MIMO transmission and the secrecy MIMO transmission in
the following two subsections, respectively.

\subsection{CR MIMO Transmission}

As shown in Fig. \ref{fig:sysmodel}(a), we consider a CR MIMO
channel, where the SU transmitter (SU-Tx) is equipped with $N$
transmit antennas, and the SU receiver (SU-Rx) is equipped with $M$
receive antennas. The SU-Tx to SU-Rx channel is denoted by a
$N\times M$ matrix $\bH_s$. Moreover, there are $K$ single-antenna
PU receivers denoted by PU$_i$, $i=1,\cdots,K$, and the channel from
SU-Tx to PU$_i$ is denoted by the $N\times 1$ vector $\bh_i$. The
received signal $\by$ at SU-Rx is expressed as
\begin{align}
\by=\bH_s^H\bx+\bz
\end{align}
where $\bx$ is the transmit signal vector at SU-Tx, and $\bz$
denotes the noise vector at SU-Rx. The entries of the noise vector
are independent circularly symmetric complex Gaussian (CSCG) random
variables of zero mean and covariance matrix $\bI$. Since the SU
shares the same spectrum with the PUs, there are $K$ IT constraints
imposed to the SU transmission, expressed as
$\mathbb{E}[|\bh_i^H\bx|^2]\le \Gamma_i, i=1,\cdots,K$, where
$\Gamma_i$ denotes the tolerable IT limit for PU$_i$.

Consider the CR MIMO transmission problem, in which we determine the
optimal transmit covariance matrix for SU-Tx to maximize the data
rate subject to the transmit power constraint and the IT constraints
for the $K$ PUs. Mathematically, this problem can be formulated as
\cite{Liang:jstsp}
\begin{align}
\begin{split}
\mathbf{(PA)}:~~~\max_{\sbS}~&\log|\bI+\bH_s^H\bS\bH_s|\notag\\
\text{subject to:}~&\text{tr}(\bS)\le P \\
&\bh_i^H\bS\bh_i\le \Gamma_i,~i=1,\cdots,K
\end{split}
\end{align}
where $\bS=\mathbb{E}[\bx\bx^H]$ denotes the transmit covariance
matrix at SU-Tx, and $P$ denotes the transmit power constraint. Note
that $\bS$ is a positive semi-definite matrix such that
$\mathbf{(PA)}$ is a convex problem and can be solved efficiently by
the standard interior point method \cite{Boyd_optimization_book}.

\subsection{Secrecy MIMO Transmission}

As shown in Fig. \ref{fig:sysmodel}(b), we consider a MIMO secrecy
channel, where the secrecy transmitter (SC-Tx) is equipped with $N$
transmit antennas, and the secrecy receiver (SC-Rx) is equipped with
$M$ receive antennas. Moreover, there are $K$ single-antenna
eavesdroppers. In accordance with the earlier introduced CR MIMO
channel, the channel response from SC-Tx to SC-Rx is denoted by
$\bH_s$, and the channel response from SC-Tx to the $i$th
eavesdropper (EA$_i$) is denoted by $\bh_i, i=1,\cdots,K$. According
to the secrecy requirement, the transmitted message $W$ from SC-Tx
should not be decoded by any of the eavesdroppers, i.e.,
$H(W|y_i)\ge r, \forall i$, with $y_i$ denoting the received signal
at EA$_i$, and $r$ denoting the secrecy transmit rate. According to
the results in \cite{wornell:MISOscr07,hassibi:MIMOscr07}, the
secrecy capacity can be obtained by solving the following
optimization problem
\begin{align}
\begin{split}
\mathbf{(PB)}:~~~\max_{\sbS}~\min_{i}~&\log|\bI+\bH_s^H\bS\bH_s|-\log\Big(1+\frac{\bh_i^H\bS\bh_i}{\sigma_i^2}\Big)\notag\\
\text{subject to:}~&\text{tr}(\bS)\le P
\end{split}
\end{align}
where $\bS$ denotes the transmit covariance matrix of SC-Tx, similar
to that of SU-Tx in the CR case, and $\sigma_i^2$ denotes the
variance of the zero-mean CSCG noise at EA$_i$.

We see that $\mathbf{(PB)}$ is a non-convex optimization problem
since its objective function is the difference between two concave
functions of $\bS$ and thus not necessarily concave. Existing
methods in the literature
\cite{Ulukus:MISOscr07,ruoheng08:MISOBCCM,wornell:MISOscr07,hassibi:MIMOscr07}
for the MIMO secrecy capacity computation is only applicable to the
case of a single eavesdropper. However, these methods cannot solve
the case with multiple eavesdroppers $\mathbf{(PB)}$ even for the
case where each eavesdropper has a single antenna\footnote{Problem
$\mathbf{(PB)}$ in the case of multi-antenna eavesdroppers will be
studied later in Section \ref{sect:m_ante}.}.

\begin{Remark}
According to Fig. \ref{fig:sysmodel}, it is easy to observe that the
system models of the CR transmission and the secrecy transmission
bear the similarity that they both need to control the received
signal power levels at both PUs and eavesdroppers. However, note
that $\mathbf{(PA)}$ guarantees that the interference power at each
PU receiver is below the required threshold without considering the
PU noise power, while for $\mathbf{(PB)}$, through the second term
in the objective function, the confidential level at each
eavesdropper is not only related to the received signal power from
SC-Tx, but also related to the noise power at eavesdroppers.
Therefore, one immediate question is whether there exists a
relationship between these two systems such that we can solve the
non-convex problem $\mathbf{(PB)}$ by transforming it into some form
of $\mathbf{(PA)}$ that is convex and thus efficiently solvable.
With this motivation, we first study the relationship between these
two problems, and then propose corresponding algorithms to solve
$\mathbf{(PB)}$.
\end{Remark}

\section{Relationship Between Secrecy Capacity and CR Spectrum Sharing Capacity}\label{sect:mnrlt}

In this section, we present main theoretical results of the paper on
the relationship between the secrecy capacity computation problem
$\mathbf{(PB)}$ and the CR spectrum sharing capacity computation
problem $\mathbf{(PA)}$. While the developed relationship applies to
both single-antenna and multi-antenna CR/secrecy channels, we are
particularly interested in the multi-antenna case since it provides
a general guidance for solving $\mathbf{(PB)}$.

\begin{theorem}\label{thm:1}
For a given $\mathbf{(PB)}$, there exists a set of IT constraint
values, $\Gamma_i$, $i=1,\cdots,K$, such that the resulting
$\mathbf{(PA)}$ has the same solution as that of $\mathbf{(PB)}$.
\end{theorem}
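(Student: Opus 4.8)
The plan is to connect the non-convex max-min problem $\mathbf{(PB)}$ with a parametrized family of convex problems $\mathbf{(PA)}$ through their optimal solutions. The key observation is that once we fix the optimal covariance matrix $\bS^\star$ of $\mathbf{(PB)}$, the interference terms $\bh_i^H\bS^\star\bh_i$ at the eavesdroppers are determined. The natural candidate for the IT limits is therefore $\Gamma_i := \bh_i^H\bS^\star\bh_i$ for $i=1,\dots,K$. With this choice, $\bS^\star$ is certainly \emph{feasible} for the resulting $\mathbf{(PA)}$, since it satisfies both the power constraint $\text{tr}(\bS^\star)\le P$ (inherited from $\mathbf{(PB)}$) and all the IT constraints with equality. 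The entire difficulty then reduces to showing that $\bS^\star$ is in fact \emph{optimal} for $\mathbf{(PA)}$, not merely feasible.

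To establish optimality, I would exploit the KKT conditions of both problems, taking advantage of the fact that $\mathbf{(PA)}$ is convex so that its KKT conditions are both necessary and sufficient for global optimality. First I would write the KKT stationarity condition for $\mathbf{(PB)}$ at $\bS^\star$. Because $\mathbf{(PB)}$ is a max-min problem, at the optimum there is an active set of eavesdroppers achieving the minimum, and by a standard argument one can represent the subgradient of the inner minimum as a convex combination of the gradients of the individual objectives, introducing multipliers $\lambda_i\ge 0$ with $\sum_i\lambda_i=1$ supported on the active set. Writing $\nabla_{\bS}\log|\bI+\bH_s^H\bS\bH_s| = \bH_s(\bI+\bH_s^H\bS\bH_s)^{-1}\bH_s^H$ and $\nabla_{\bS}\log(1+\bh_i^H\bS\bh_i/\sigma_i^2) = \bh_i\bh_i^H/(\sigma_i^2+\bh_i^H\bS\bh_i)$, the stationarity condition for $\mathbf{(PB)}$ (with the trace-power multiplier $\mu\ge 0$) takes the form of a linear matrix equation balancing the signal gradient against a weighted sum of the $\bh_i\bh_i^H$ terms and the identity.

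Next I would write the KKT stationarity condition for $\mathbf{(PA)}$ with the IT limits set to $\Gamma_i=\bh_i^H\bS^\star\bh_i$. Here the objective gradient is identical, $\bH_s(\bI+\bH_s^H\bS\bH_s)^{-1}\bH_s^H$, while the IT constraints contribute multipliers $\nu_i\ge 0$ multiplying $\bh_i\bh_i^H$, and the power constraint contributes a multiplier $\mu'\ge 0$ multiplying $\bI$. The plan is then to read off the required $\mathbf{(PA)}$ multipliers directly from the $\mathbf{(PB)}$ multipliers: setting $\nu_i := \lambda_i/(\sigma_i^2+\bh_i^H\bS^\star\bh_i)$ and $\mu' := \mu$ makes the two stationarity equations coincide term-by-term at $\bS=\bS^\star$. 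Complementary slackness for $\mathbf{(PA)}$ holds automatically because each IT constraint is met with equality by construction, and the sign conditions $\nu_i\ge 0$ follow from $\lambda_i\ge 0$ and positivity of the denominators. Thus $\bS^\star$ together with $(\nu_i,\mu')$ satisfies the KKT system of the convex problem $\mathbf{(PA)}$, which by sufficiency certifies $\bS^\star$ as the global optimum of $\mathbf{(PA)}$.

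The main obstacle I anticipate is the max-min subgradient representation at the first step: justifying that the non-smooth inner minimum admits multipliers $\lambda_i$ summing to one and that stationarity of $\mathbf{(PB)}$ can be written in the clean matrix form above requires care, since $\mathbf{(PB)}$ is non-convex and the objective is a difference of concave functions. One should invoke Danskin-type or minimax subdifferential calculus over the active eavesdropper set, and confirm that a KKT point of $\mathbf{(PB)}$ exists (the feasible set $\{\text{tr}(\bS)\le P,\ \bS\succeq\bzero\}$ is compact, guaranteeing an optimum, but a constraint qualification such as Slater's condition must be checked to guarantee the multipliers). A secondary subtlety is handling the implicit positive-semidefiniteness constraint $\bS\succeq\bzero$, which introduces an additional complementary-slackness matrix multiplier that must appear consistently in both KKT systems; since this constraint is identical in $\mathbf{(PA)}$ and $\mathbf{(PB)}$, its multiplier simply carries over unchanged, so it does not break the correspondence.
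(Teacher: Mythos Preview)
Your choice of $\Gamma_i=\bh_i^H\bS^\star\bh_i$ is exactly the paper's, and your KKT-matching argument is sound in principle, but the paper takes a dramatically shorter route that avoids KKT, Danskin, and all the non-convexity worries you flag. After fixing $\Gamma_i=\bh_i^H\bS^\star\bh_i$, the paper simply argues by contradiction at the primal level: if $\mathbf{(PA)}$ with these $\Gamma_i$ had an optimal $\bar{\bS}$ with $\log|\bI+\bH_s^H\bar{\bS}\bH_s|>\log|\bI+\bH_s^H\bS^\star\bH_s|$, then since $\bar{\bS}$ is feasible for $\mathbf{(PA)}$ it satisfies $\bh_i^H\bar{\bS}\bh_i\le\Gamma_i=\bh_i^H\bS^\star\bh_i$ for every $i$, so each term $\log|\bI+\bH_s^H\bar{\bS}\bH_s|-\log(1+\bh_i^H\bar{\bS}\bh_i/\sigma_i^2)$ strictly exceeds the corresponding term for $\bS^\star$, and hence so does the minimum over $i$; together with $\text{tr}(\bar{\bS})\le P$ this makes $\bar{\bS}$ strictly better for $\mathbf{(PB)}$, contradicting optimality of $\bS^\star$. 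This two-line monotonicity argument needs no first-order conditions at all. Your approach does buy something the paper's does not: explicit formulas $\nu_i=\lambda_i/(\sigma_i^2+\bh_i^H\bS^\star\bh_i)$ relating the dual variables of the two problems, which is in the spirit of the paper's later Proposition~\ref{thm:gcnv}. But for Proposition~\ref{thm:1} itself, the elementary comparison argument is both shorter and free of the regularity caveats (existence of KKT multipliers for a non-convex, non-smooth max--min) that you correctly identify as obstacles.
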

\begin{proof}
Please refer to Appendix \ref{app:A}.
\end{proof}

Proposition \ref{thm:1} establishes the relationship between
$\mathbf{(PA)}$ and $\mathbf{(PB)}$. To further investigate this
relationship, we define an auxiliary function of $\Gamma_i$s as
\begin{align}\label{def:g}
\begin{split}
g(\Gamma_1,\cdots,\Gamma_K):=\max_{\sbS}&~|\bI+\bH_s^H\bS\bH_s|\\
\text{subject to:}~&\text{tr}(\bS)\le P \\
&\bh_i^H\bS\bh_i\le \Gamma_i,i=1,\cdots,K.
\end{split}
\end{align}
Note that the only difference between Problem \eqref{def:g} and
$\mathbf{(PA)}$ is that the objective function in Problem
\eqref{def:g} does not involve a logarithmic function of matrix
determinant while that in (PA) does. As a result, Problem
\eqref{def:g} is non-convex since its objective function is not
concave in $\bS$. Also note that Problem \eqref{def:g} is equivalent
to $\mathbf{(PA)}$ since they have the same optimal solution for
$\bS$. Therefore, although Problem \eqref{def:g} is non-convex, its
optimal solution can be obtained via solving the convex counterpart
$\mathbf{(PA)}$.
\begin{theorem}\label{thm:quansicnv}
$\mathbf{(PB)}$ is equivalent to the following optimization problem:
\begin{align}\label{prob:frac}
\max_{\Gamma_1,\cdots,\Gamma_K}\min_i~F_i(\Gamma_1,\cdots,\Gamma_K):=\frac{g(\Gamma_1,\cdots,\Gamma_K)}{1+\Gamma_i/\sigma_i^2}.
\end{align}
\end{theorem}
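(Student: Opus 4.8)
The plan is to first strip away the logarithm so as to expose the fractional structure hidden inside $\mathbf{(PB)}$, and then to show that maximizing over $\bS$ directly coincides with the two-stage optimization in which one fixes the interference levels $\Gamma_i=\bh_i^H\bS\bh_i$, optimizes $\bS$ to obtain $g$, and finally optimizes over the $\Gamma_i$. Since the term $\log|\bI+\bH_s^H\bS\bH_s|$ does not depend on $i$ and $\log(\cdot)$ is strictly increasing, the inner objective of $\mathbf{(PB)}$ can be recast as
\begin{align}
\min_i\Big[\log|\bI+\bH_s^H\bS\bH_s|-\log\big(1+\tfrac{\bh_i^H\bS\bh_i}{\sigma_i^2}\big)\Big]=\log\min_i\frac{|\bI+\bH_s^H\bS\bH_s|}{1+\bh_i^H\bS\bh_i/\sigma_i^2}.
\end{align}
Hence $\mathbf{(PB)}$ has the same optimal $\bS$ as the value $V:=\max_{\bS\succeq\bzero,\,\text{tr}(\bS)\le P}\ \min_i \frac{|\bI+\bH_s^H\bS\bH_s|}{1+\bh_i^H\bS\bh_i/\sigma_i^2}$, and it remains to prove $V=W$, where $W:=\max_{\Gamma_1,\cdots,\Gamma_K}\min_i F_i(\Gamma_1,\cdots,\Gamma_K)$.

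For the direction $W\ge V$, I would take an optimal $\bS^{\star}$ of $V$ (which exists since the feasible set is compact and the objective continuous) and set $\Gamma_i^{\star}:=\bh_i^H\bS^{\star}\bh_i$. Then $\bS^{\star}$ is feasible for the problem defining $g(\Gamma_1^{\star},\cdots,\Gamma_K^{\star})$, so $g(\Gamma^{\star})\ge|\bI+\bH_s^H\bS^{\star}\bH_s|$; dividing by $1+\Gamma_i^{\star}/\sigma_i^2$ and taking the minimum over $i$ gives $\min_i F_i(\Gamma^{\star})\ge V$, whence $W\ge V$ and, moreover, the outer maximum is attained at this $\Gamma^{\star}$. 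For the reverse direction $V\ge W$, I would fix any admissible $\Gamma_1,\cdots,\Gamma_K$ and let $\bS^{\star}$ attain $g(\Gamma)$. By the constraints defining $g$ we have $\bh_i^H\bS^{\star}\bh_i\le\Gamma_i$, so the denominator satisfies $1+\bh_i^H\bS^{\star}\bh_i/\sigma_i^2\le 1+\Gamma_i/\sigma_i^2$, and therefore each term $|\bI+\bH_s^H\bS^{\star}\bH_s|/(1+\bh_i^H\bS^{\star}\bh_i/\sigma_i^2)\ge g(\Gamma)/(1+\Gamma_i/\sigma_i^2)=F_i(\Gamma)$. Since $\bS^{\star}$ is feasible for $V$, taking the minimum over $i$ shows $V\ge\min_i F_i(\Gamma)$, and maximizing over $\Gamma$ yields $V\ge W$. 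Combining the two inequalities gives $V=W$, and the correspondence $\Gamma_i^{\star}=\bh_i^H\bS^{\star}\bh_i$ matches the optimal solutions of the two problems, establishing the claimed equivalence.

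The step I expect to require the most care is the reverse direction, because the constraints in the definition of $g$ are inequalities $\bh_i^H\bS\bh_i\le\Gamma_i$ rather than equalities, and one must check that this relaxation is harmless. The crux is a monotonicity observation: a smaller realized interference only shrinks the denominator and hence \emph{enlarges} $F_i$, so replacing $\Gamma_i$ by the actual value $\bh_i^H\bS^{\star}\bh_i$ can never decrease any fractional term. This is precisely what forces the two-stage optimization (maximize the numerator $g$ first, then optimize $\Gamma$) to agree with the direct optimization over $\bS$. I would also record the elementary bound $\bh_i^H\bS\bh_i\le\text{tr}(\bS)\,\|\bh_i\|^2\le P\,\|\bh_i\|^2$, so that the relevant $\Gamma_i$ range over a compact interval $[0,\,P\|\bh_i\|^2]$; together with the compactness of $\{\bS\succeq\bzero:\text{tr}(\bS)\le P\}$ this guarantees that all maxima appearing above are attained and that the substitutions are legitimate.
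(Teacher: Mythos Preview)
Your proof is correct and follows essentially the same approach as the paper: both first strip the logarithm to obtain the fractional max--min problem, then show equality of optimal values by choosing $\Gamma_i^\star=\bh_i^H\bS^\star\bh_i$ from an optimal $\bS^\star$ and exploiting the monotonicity of the denominator under the inequality constraints $\bh_i^H\bS\bh_i\le\Gamma_i$. The only cosmetic difference is that the paper phrases both directions as contradiction arguments while you give a direct inequality chain; your version is arguably cleaner, and your explicit compactness remark (guaranteeing that all maxima are attained) fills in a detail the paper leaves implicit.
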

\begin{proof}
Please refer to Appendix \ref{app:B}.
\end{proof}

Proposition \ref{thm:quansicnv} establishes the relationship between
$\mathbf{(PB)}$ and the auxiliary function
$g(\Gamma_1,\cdots,\Gamma_K)$ that is related to $\mathbf{(PA)}$.
The equivalence between Problem \eqref{prob:frac} and
$\mathbf{(PB)}$ means that by solving the optimal $\Gamma_i$s in
Problem \eqref{prob:frac}, we could solve an optimal $\bS$ given
$g(\Gamma_1,\cdots,\Gamma_K)$ is an embedded optimization problem
over $\bS$ inside Problem \eqref{prob:frac}. Such an optimal $\bS$
is also the solution for $\mathbf{(PB)}$, for which the explanation
is given in Appendix \ref{app:B}.

Problem \eqref{prob:frac} can be solved by utilizing an important
property of $g(\Gamma_1,\cdots,\Gamma_K)$ described as follows:
\begin{theorem}\label{thm:gcnv}
The function $g(\Gamma_1,\cdots,\Gamma_K)$ is a concave function
with respect to $\Gamma_1,\cdots,\Gamma_K$, and
\begin{align}
\gamma_i(\Gamma_1,\cdots,\Gamma_K):=\frac{\partial
g(\Gamma_1,\cdots,\Gamma_K)}{\partial
\Gamma_i}=\mu_i^{(1)}|\bI+\bH_s^H\bS^{(1)}\bH_s|,~i=1,\cdots,K
\end{align}
where $\bS^{(1)}$ and $\mu_i^{(1)}$ are the optimal solution of
$\mathbf{(PA)}$ and the corresponding Lagrange multiplier (the dual
solution) with respect to the $i$th IT constraint, respectively.
\end{theorem}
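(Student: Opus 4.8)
The plan is to prove the two assertions separately and to exploit the convex structure already established for $\mathbf{(PA)}$. Throughout I set $G(\bGamma):=\log g(\bGamma)$, which is exactly the optimal value of the convex problem $\mathbf{(PA)}$: since $\log(\cdot)$ is increasing, $\mathbf{(PA)}$ and Problem \eqref{def:g} share the same maximizer $\bS^{(1)}$, so that $g(\bGamma)=|\bI+\bH_s^H\bS^{(1)}\bH_s|$ and $G=\log g$ at the optimum. Two structural facts drive everything: (i) the feasible set $\calA(\bGamma):=\{\bS\succeq\bzero:\text{tr}(\bS)\le P,\ \bh_i^H\bS\bh_i\le\Gamma_i\ \forall i\}$ depends on $\bGamma$ only through the right-hand sides of constraints that are linear in $\bS$; and (ii) the map $\bS\mapsto\bI+\bH_s^H\bS\bH_s$ is affine.

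For the gradient formula I would apply Lagrangian sensitivity analysis to $\mathbf{(PA)}$. With the Lagrangian
\begin{equation}
\calL(\bS,\lambda,\{\mu_i\})=\log|\bI+\bH_s^H\bS\bH_s|-\lambda\big(\text{tr}(\bS)-P\big)-\sum_{i=1}^{K}\mu_i\big(\bh_i^H\bS\bh_i-\Gamma_i\big),
\end{equation}
the KKT conditions hold at $\bS^{(1)}$ with optimal multipliers $\lambda^{(1)}\ge 0$ and $\mu_i^{(1)}\ge 0$. By the envelope theorem \cite{Boyd_optimization_book}, the multiplier $\mu_i^{(1)}$ is the shadow price of the $i$th IT constraint, giving $\partial G/\partial\Gamma_i=\mu_i^{(1)}$. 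Since $g=e^{G}$, the chain rule then yields $\partial g/\partial\Gamma_i=g\,(\partial G/\partial\Gamma_i)=\mu_i^{(1)}\,|\bI+\bH_s^H\bS^{(1)}\bH_s|$, which is precisely $\gamma_i$. This step requires only differentiability of the optimal value, which I would justify via uniqueness of $\bS^{(1)}$ together with strict complementarity at the optimum.

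For concavity I would take two parameter points $\bGamma^{(1)},\bGamma^{(2)}$ with maximizers $\bS_1,\bS_2$, form $\bar\bS=\theta\bS_1+(1-\theta)\bS_2$ for $\theta\in[0,1]$, and observe that $\bar\bS\in\calA(\theta\bGamma^{(1)}+(1-\theta)\bGamma^{(2)})$: each constraint is linear in $\bS$, and the positive semidefinite cone is convex, so feasibility is preserved and the right-hand sides combine convexly. Consequently
\begin{equation}
g\big(\theta\bGamma^{(1)}+(1-\theta)\bGamma^{(2)}\big)\ge|\bI+\bH_s^H\bar\bS\bH_s|,
\end{equation}
and it remains to bound the right-hand side below by $\theta\,g(\bGamma^{(1)})+(1-\theta)\,g(\bGamma^{(2)})$.

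The main obstacle is exactly this last step. Because $\bI+\bH_s^H\bS\bH_s$ is affine in $\bS$, what is needed is concavity of $\bS\mapsto|\bI+\bH_s^H\bS\bH_s|$ along the segment $[\bS_1,\bS_2]$, and the determinant is \emph{not} concave on the PSD cone for general matrix size. The two clean surrogates — $\log|\cdot|$, which is concave and immediately makes $G=\log g$ concave as the value function of a concave program, and Minkowski's determinant inequality, which makes $|\cdot|^{1/M}$ (hence $g^{1/M}$) concave — do not by themselves transfer concavity to $g$, since exponentiating or raising to the $M$th power destroys concavity. I therefore expect the crux to be leveraging the problem's additional structure (the rank-one IT constraints and the fact that $\bar\bS$ is feasible with the relevant constraints active) to recover concavity of $g$ directly on the operative region; the single-receive-antenna case $M=1$, where $|\bI+\bH_s^H\bS\bH_s|$ is affine in $\bS$ and $g$ is a maximum of affine functions over a linearly parametrized polytope, is the exact prototype guiding the general argument.
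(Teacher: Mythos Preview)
Your derivation of the gradient formula is correct and is essentially the paper's argument stated more economically: the paper also reduces to the Lagrange multipliers of $\mathbf{(PA)}$ via the relation $\gamma_i^{(0)}=\mu_i^{(1)}\,|\bI+\bH_s^H\bS^{(1)}\bH_s|$ between the multipliers of Problem~\eqref{def:g} and of $\mathbf{(PA)}$, then reads off $\partial g/\partial\Gamma_i=\gamma_i^{(0)}$ from the dual.

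The concavity part, however, is incomplete, and the route you are attempting cannot be closed. You correctly diagnose that the convex-combination-of-maximizers argument stalls because $\bS\mapsto|\bI+\bH_s^H\bS\bH_s|$ is not concave for $M>1$, and that the concavity of $G=\log g$ (or of $g^{1/M}$ via Minkowski) does not transfer to $g$. But your proposed fix---exploiting the rank-one structure of the IT constraints or extrapolating from the $M=1$ prototype---is a red herring: the result holds for arbitrary $M$ and for general linear constraints $\text{tr}(\bA_i\bS)\le\Gamma_i$, so no special structure of the $\bh_i\bh_i^H$ is needed or used.

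What the paper does instead is to work entirely on the dual side. The key step (its Lemma~\ref{lemma:dualprb}) is that, although Problem~\eqref{def:g} is non-convex, its KKT conditions are \emph{sufficient} because they are in one-to-one correspondence with those of the convex problem $\mathbf{(PA)}$; hence Problem~\eqref{def:g} enjoys zero duality gap. Once that is in hand, $g(\bGamma)=\min_{\delta\ge0,\,\gamma_i\ge0}\max_{\bS}L_2(\bS,\delta,\{\gamma_i\})$, and for fixed $(\delta,\{\gamma_i\})$ the inner maximum is affine in $\bGamma$ (the only $\bGamma$-dependence in $L_2$ is the term $\sum_i\gamma_i\Gamma_i$). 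Thus $g$ is a pointwise infimum of affine functions of $\bGamma$, hence concave. The paper phrases this last step slightly differently---it restricts to a line $\bGamma=\bx+t\bv$, computes $\partial g/\partial t=\sum_i\gamma_i v_i$, and uses a subgradient inequality for the dual function to show this derivative is nonincreasing in $t$---but the content is the same. The missing idea in your proposal is precisely this strong-duality step for the non-convex determinant problem; without it, no amount of primal manipulation will recover concavity of $g$.
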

\begin{proof}
Please refer to Appendix \ref{app:C}.
\end{proof}

Note that from Proposition \ref{thm:gcnv}, it follows that the
gradient of $g(\Gamma_1,\cdots,\Gamma_K)$ in \eqref{prob:frac} can
be obtained by solving $\mathbf{(PA)}$ via the Lagrange duality
method, which completes the equivalence between $\mathbf{(PA)}$ and
$\mathbf{(PB)}$ via the intermediate problem \eqref{prob:frac}. At
last, we have
\begin{theorem}\label{thm:quscnv1}
Problem \eqref{prob:frac} is a quasi-concave maximization problem.
\end{theorem}
\begin{proof}
Please refer to Appendix \ref{app:D}.
\end{proof}
Proposition \ref{thm:quscnv1} suggests that Problem
\eqref{prob:frac} can be solved by utilizing convex optimization
techniques, for which the details are given in the next section.

\section{Algorithms}\label{sect:algrm}

In this section, we present the algorithms to compute the MIMO
secrecy capacity by exploiting the relationship between the secrecy
transmission and the CR transmission developed in Section
\ref{sect:mnrlt}. The algorithm for the general case of the MIMO
secrecy channel with multiple eavesdroppers is presented first. Two
reduced-complexity algorithms are next presented, one for the
special case with one single eavesdropper, and the other for the
special case with a single-antenna secrecy receiver, i.e., the
multiple-input single-output (MISO) secrecy channel.

\subsection{General Case}

In this subsection, we present the algorithm for $\mathbf{(PB)}$ in
the general case of MIMO secrecy channels with multiple
eavesdroppers. According to Propositions \ref{thm:quansicnv} and
\ref{thm:quscnv1}, $\mathbf{(PB)}$ is equivalent to the
quasi-concave maximization problem \eqref{prob:frac}. Thus, we
instead study Problem \eqref{prob:frac} since it is easier to handle
than $\mathbf{(PB)}$.

According to \cite{Boyd_optimization_book}, a quasi-concave
maximization problem can be reduced to solving a sequence of convex
feasibility problems. Thus, Problem \eqref{prob:frac} can be further
transformed as
\begin{align}\label{prob:fractran}
\begin{split}
\max_{t,\Gamma_1,\cdots,\Gamma_K}&~t\\
\text{subject to}:&~g(\Gamma_1,\cdots,\Gamma_K)\ge
t(1+\Gamma_i/\sigma_i^2),i=1,\cdots,K.
\end{split}
\end{align}
Let $t^*$ be the optimal solution of Problem \eqref{prob:fractran}.
Clearly, $t^*$  is also the optimal value of Problem
\eqref{prob:frac}. If the feasibility problem
\begin{align}\label{prob:feas}
\begin{split}
\max_{\Gamma_1,\cdots,\Gamma_K} &\text{0}\\
\text{subject to}:&~g(\Gamma_1,\cdots,\Gamma_K)\ge
t(1+\Gamma_i/\sigma_i^2),i=1,\cdots,K
\end{split}
\end{align}
for a given $t$ is feasible, then it follows that $t^*\ge t$.
Conversely, if Problem \eqref{prob:feas} is infeasible, then $t^*<
t$. Therefore, by assuming an interval $[~0,~\bar{t}~]$ known to
contain the optimal $t^*$, the optimal solution of Problem
\eqref{prob:fractran} can be found easily via a bisection search.
Note that a suitable value for $\bar{t}$ can be chosen as
$g(\infty,\cdots,\infty)$ from \eqref{def:g}.

We next solve the feasibility problem \eqref{prob:feas} by a similar
method discussed in \cite{rzhang06:powerregion}. It is worth noting
that the feasibility problem \eqref{prob:feas} can be viewed as an
optimization problem. The Lagrangian of Problem \eqref{prob:feas}
can be written as
\begin{align}\label{eq:Lnu}
L_0(\{\nu_i\},\Gamma_1,\cdots,\Gamma_K)=\sum_{i=1}^{K}\nu_i\Big(g(\Gamma_1,\cdots,\Gamma_K)-t(1+\Gamma_i/\sigma_i^2)\Big)
\end{align}
where $\nu_i$ is the non-negative dual variable for the $i$th
constraint, and $\{\nu_i\}$ denotes $\nu_1,\cdots,\nu_K$. The
corresponding dual function is then defined as
\begin{align}\label{eq:gnu}
f_0(\{\nu_i\})=\max_{\Gamma_1,\cdots,\Gamma_K}~\sum_{i=1}^{K}\nu_i\Big(g(\Gamma_1,\cdots,\Gamma_K)-t(1+\Gamma_i/\sigma_i^2)\Big).
\end{align}
Due to its convexity, Problem \eqref{prob:feas} can be transformed
into its equivalent dual problem as
\begin{align}\label{prob:feasdual}
\min_{\{\nu_i\}}~f_0(\{\nu_i\})
\end{align}
and the duality gap between the optimal values of Problem
\eqref{prob:feas} and Problem \eqref{prob:feasdual} is zero if
Problem \eqref{prob:feas} is feasible.

Since it is known from Proposition \ref{thm:gcnv} that function
$g(\Gamma_1,\cdots,\Gamma_K)$ is concave with respect to
$\{\Gamma_1,\cdots,\\\Gamma_K\}$, Problem \eqref{eq:gnu} can be
solved via a gradient-based algorithm. According to Proposition
\ref{thm:gcnv}, the gradient of function
$g(\Gamma_1,\cdots,\Gamma_K)$ can be obtained by solving
$\mathbf{(PA)}$. Furthermore, since function $f_0(\{\nu_i\})$ is
convex with respect to $\{\nu_i\}$, Problem \eqref{prob:feasdual}
can be solved by a subgradient-based algorithm, such as the
ellipsoid method \cite{Boyd_optimization_book}. Similar to Lemma 3.5
in \cite{rzhang06:powerregion}, Problem \eqref{prob:feas} is
infeasible if and only if there exist $\{\nu_i\}$ such that
$f_0(\{\nu_i\})<0$. Using this fact along with the subgradient-based
search over $\{\nu_i\}$, the feasibility problem \eqref{prob:feas}
can be solved. In summary, the algorithm for Problem
\eqref{prob:frac} with a target accuracy parameter $\epsilon$ is
listed as follows:

\noindent {\underline{\it Algorithm 1:}}
\begin{itemize}
    \item Initialization:
    $t^{\text{min}}=0,t^{\text{max}}=\bar{t}$.
    \item Repeat
    \begin{itemize}
        \item
        $t\leftarrow\frac{1}{2}(t^{\text{min}}+t^{\text{max}})$.
        \item Solve the feasibility problem \eqref{prob:feas}. If Problem \eqref{prob:feas}
        is feasible, $t^{\text{min}} \leftarrow t$; otherwise, $t^{\text{max}} \leftarrow t$.
        \item Stop, when $t^{\text{max}}-t^{\text{min}}\le
        \epsilon$.
    \end{itemize}
    \item The optimal value of Problem \eqref{prob:frac} is taken as $t^{\text{min}}$.
\end{itemize}

\subsection{Single-Eavesdropper Case}

We now consider a special case of $\mathbf{(PB)}$, where there is
only one single eavesdropper in the secrecy channel, and propose a
simplified algorithm over Algorithm 1 for the corresponding
$\mathbf{(PB)}$.

Consider first the counterpart CR transmission problem
$\mathbf{(PA)}$. For the single-PU case, $\mathbf{(PA)}$ can be
rewritten as
\begin{align}
\begin{split}
\mathbf{(PC)}:~~~\max_{\sbS}~&\log|\bI+\bH_s^H\bS\bH_s|\notag\\
\text{subject to:}~&\text{tr}(\bS)\le P \\
&\bh^H\bS\bh\le \Gamma
\end{split}
\end{align}
where $\bh$ denotes the channel from SU-Tx to the single PU, and
$\Gamma$ is the corresponding IT limit for the PU. On the other
hand, for the single-eavesdropper case, the secrecy transmission
problem $\mathbf{(PB)}$ can be rewritten as
\begin{align}
\begin{split}
\mathbf{(PD)}:~~~\max_{\sbS}~&\log|\bI+\bH_s^H\bS\bH_s|-\log\Big(1+\frac{\bh^H\bS\bh}{\sigma^2}\Big)\notag\\
\text{subject to:}~&\text{tr}(\bS)\le P
\end{split}
\end{align}
where $\bh$ denotes the channel from SC-Tx to the single
eavesdropper, and $\sigma^2$ denotes the variance of the noise at
the eavesdropper. Following Proposition \ref{thm:quansicnv},
$\mathbf{(PD)}$ is equivalent to the optimization problem
\begin{align}\label{prob:frac2}
\max_{\Gamma}~F(\Gamma):=\frac{g(\Gamma)}{1+\Gamma/\sigma^2}
\end{align}
where the function $g(\Gamma)$ is the single-PU counterpart of that
in \eqref{def:g}. Moreover, it is evident from Proposition
\ref{thm:quscnv1} that function $F(\Gamma)$ is quasi-concave and the
optimization problem \eqref{prob:frac2} is a quasi-concave
maximization problem.
\begin{Lemma}\label{lemma:eql}
The sufficient and necessary condition for $\Gamma^*$ to be the
optimal solution of Problem \eqref{prob:frac2} is
\begin{align}\label{eq:suffcon}
\gamma(\Gamma^*)(1+\Gamma^*/\sigma^2)=\frac{1}{\sigma^2}g(\Gamma^*)
\end{align}
where $\gamma(\Gamma):=\frac{\partial g(\Gamma)}{\partial \Gamma}$.
\end{Lemma}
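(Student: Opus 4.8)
The plan is to read \eqref{eq:suffcon} as the first-order stationarity condition $F'(\Gamma^*)=0$ for the scalar problem \eqref{prob:frac2}, and then to promote stationarity to global optimality using the concavity of $g$ established in Proposition \ref{thm:gcnv}. First I would differentiate $F(\Gamma)=g(\Gamma)/(1+\Gamma/\sigma^2)$ by the quotient rule, which gives
\begin{align}
F'(\Gamma)=\frac{\gamma(\Gamma)(1+\Gamma/\sigma^2)-g(\Gamma)/\sigma^2}{(1+\Gamma/\sigma^2)^2},
\end{align}
so that the sign of $F'(\Gamma)$ is governed entirely by the numerator, which I denote $\phi(\Gamma):=\gamma(\Gamma)(1+\Gamma/\sigma^2)-g(\Gamma)/\sigma^2$. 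Since the denominator $(1+\Gamma/\sigma^2)^2$ is strictly positive for $\Gamma\ge 0$, the condition $F'(\Gamma^*)=0$ is equivalent to $\phi(\Gamma^*)=0$, and rearranging $\phi(\Gamma^*)=0$ yields exactly \eqref{eq:suffcon}. This already delivers necessity: if $\Gamma^*$ is an optimal interior solution of \eqref{prob:frac2}, it must be stationary, hence $\phi(\Gamma^*)=0$.

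For sufficiency I would show that $\phi$ is nonincreasing, so that $\phi(\Gamma^*)=0$ pins down a global maximizer rather than merely a critical point. Differentiating $\phi$, the two terms proportional to $\gamma(\Gamma)/\sigma^2$ cancel and I am left with $\phi'(\Gamma)=\gamma'(\Gamma)(1+\Gamma/\sigma^2)$. By the concavity of $g$ from Proposition \ref{thm:gcnv}, its derivative $\gamma=\partial g/\partial\Gamma$ is nonincreasing, i.e.\ $\gamma'(\Gamma)\le 0$, and since $1+\Gamma/\sigma^2>0$ I conclude $\phi'(\Gamma)\le 0$. Therefore $\phi$ is nonincreasing, so that $\phi(\Gamma^*)=0$ forces $\phi(\Gamma)\ge 0$ for $\Gamma\le\Gamma^*$ and $\phi(\Gamma)\le 0$ for $\Gamma\ge\Gamma^*$; consequently $F$ is nondecreasing on $[0,\Gamma^*]$ and nonincreasing on $[\Gamma^*,\infty)$, and any $\Gamma^*$ satisfying \eqref{eq:suffcon} is a global maximizer. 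This establishes sufficiency and is consistent with the quasi-concavity asserted in Proposition \ref{thm:quscnv1}.

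I expect the main obstacle to be the regularity and boundary issues rather than the algebra. Because $g$ is only guaranteed to be concave and need not be twice differentiable, I would replace the computation $\phi'=\gamma'(1+\Gamma/\sigma^2)$ by the equivalent monotonicity statement that $\gamma$ is nonincreasing and argue directly from the sign structure of $\phi$, which preserves the conclusion without assuming $\gamma'$ exists. The remaining delicate point is the boundary case $\Gamma^*=0$, where the stationarity rearrangement is only one-sided, so the equality in \eqref{eq:suffcon} presumes an interior maximizer. I would handle this by noting from Proposition \ref{thm:gcnv} that $\gamma(\Gamma)=\mu^{(1)}|\bI+\bH_s^H\bS^{(1)}\bH_s|\ge 0$ and $g(\Gamma)>0$, so that $\phi(\Gamma)=-g(\infty)/\sigma^2<0$ once the constraint $\bh^H\bS\bh\le\Gamma$ becomes inactive (whence $\gamma(\Gamma)=0$), while in the nondegenerate regime of interest $\phi(0)>0$; the unique sign change of the nonincreasing function $\phi$ then occurs at an interior $\Gamma^*$ at which \eqref{eq:suffcon} holds with equality.
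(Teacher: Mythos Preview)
Your proposal is correct and shares the paper's key ingredients: necessity via the first-order condition $F'(\Gamma^*)=0$, and sufficiency via the monotonicity of $\gamma(\Gamma)$ inherited from the concavity of $g$ in Proposition~\ref{thm:gcnv}. The difference lies in how sufficiency is packaged. You argue \emph{directly}: defining $\phi(\Gamma)=\gamma(\Gamma)(1+\Gamma/\sigma^2)-g(\Gamma)/\sigma^2$, you show $\phi'(\Gamma)=\gamma'(\Gamma)(1+\Gamma/\sigma^2)\le 0$, so $\phi$ is nonincreasing and any zero of $\phi$ separates the increasing and decreasing regimes of $F$, forcing a global maximum there. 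The paper instead argues by \emph{contradiction and uniqueness}: it supposes two distinct points $\Gamma^{(1)}<\Gamma^{(2)}$ both satisfy \eqref{eq:suffcon}, invokes a separate lemma that $\gamma$ is nonincreasing, and manipulates the resulting inequalities together with the concavity secant bound $g(\Gamma^{(2)})-g(\Gamma^{(1)})\ge \gamma(\Gamma^{(2)})(\Gamma^{(2)}-\Gamma^{(1)})$ to derive a negative Lagrange multiplier, contradicting $\gamma\ge 0$; uniqueness plus necessity then yields sufficiency. Your route is shorter and more transparent, and it also makes the quasi-concavity of $F$ visible as a byproduct; the paper's route has the minor advantage of never formally writing $\gamma'$, which aligns with your own caveat that $g$ is concave but not a~priori twice differentiable. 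Your suggested fix---replacing $\phi'\le 0$ by the equivalent monotonicity statement for $\gamma$---closes that gap and leaves the argument at least as clean as the paper's.
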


\begin{proof}
Please refer to Appendix \ref{app:E}.
\end{proof}

Based on Lemma \ref{lemma:eql}, $\mathbf{(PD)}$ can be solved via
the equivalent problem \eqref{prob:frac2} by the bisection method
summarized as follows:

\noindent {\underline{\it Algorithm 2:}}
\begin{itemize}
    \item Initialization:
    $\Gamma^{\text{min}}=0,\Gamma^{\text{max}}=\bar{\Gamma}$.
    \item Repeat
    \begin{itemize}
        \item
        $\Gamma\leftarrow\frac{1}{2}(\Gamma^{\text{min}}+\Gamma^{\text{max}})$.
        \item Solve Problem \eqref{def:g} for the single-PU case, and compute
        $\gamma(\Gamma)$. If
        $\gamma(\Gamma)(1+\Gamma/\sigma^2)>\frac{1}{\sigma^2}g(\Gamma)$, $\Gamma^{\text{min}} \leftarrow \Gamma$; otherwise,
        $\Gamma^{\text{max}} \leftarrow \Gamma$.
        \item Stop, when $\Gamma^{\text{max}}-\Gamma^{\text{min}}\le
        \epsilon$.
    \end{itemize}
    \item The optimal solution of $\mathbf{(PD)}$ equals that of $\mathbf{(PC)}$ with the converged $\Gamma$.
\end{itemize}
Note that in the above algorithm, $\bar{\Gamma}=\bh^H\bS_o\bh$ and
$\bS_o$ is the optimal solution of $\mathbf{(PC)}$ without the
interference power constraint\footnote{When $\Gamma>\bar{\Gamma}$,
the value of $g(\Gamma)$ is constant regardless of $\Gamma$. Thus,
the optimal solution of Problem \eqref{prob:frac2} satisfies
$\Gamma^*\le\bar{\Gamma}$.}.

Algorithm 2 searches the optimal $\Gamma$ according to its gradient
direction, and thus avoids solving the sequence of feasibility
problems in Algorithm 1. Therefore, Algorithm 2 is much simpler than
Algorithm 1. However, since the general case of $\mathbf{(PB)}$ has
multiple variables $\Gamma_i$s, this gradient-based algorithm cannot
be applied to the general case.

\begin{Remark}
Similar to Proposition \ref{thm:1}, a dual relationship between the
secrecy transmission $\mathbf{(PD)}$ and the CR transmission
$\mathbf{(PC)}$ in the case of a single eavesdropper/PU can be
described as follows. For a given $\mathbf{(PC)}$, there is a
parameter $\sigma$, such that $\mathbf{(PD)}$ with the noise
variance $\sigma^2$ at the eavesdropper has the same solution as
that of $\mathbf{(PC)}$. This property can be proved by combining
Lemma \ref{lemma:eql} and Proposition \ref{thm:quansicnv}. This
proof is thus omitted for brevity.
\end{Remark}

\subsection{Single-Antenna SC-Rx Case}

We now turn our attention to another special case of the secrecy
channel where SC-Rx is equipped with a single receive antenna, i.e.,
the MISO secrecy channel. Same as the MIMO secrecy case, each
eavesdropper is still assumed to have a single antenna. For
notational convenience, $\mathbf{(PA)}$, $\mathbf{(PB)}$,
$\mathbf{(PC)}$,  and $\mathbf{(PD)}$  in the case of single-antenna
SC-Rx are denoted correspondingly by PA-s, PB-s, PC-s, and PD-s.

The problem PC-s has been studied in \cite{Liang:jstsp}. In
\cite{Liang:jstsp}, it was shown that the optimal transmit
covariance matrix for the CR MISO channel is a rank-one matrix, and
a closed-form solution for the optimal transmit beamforming was
presented. The problem PD-s has been studied in
\cite{Ulukus:MISOscr07,Ruoheng08:BCICCM}, where it was shown that
the optimal transmit covariance matrix for the secrecy MISO channel
is also a rank-one matrix, and based on the generalized eigenvalue
decomposition, a closed-form solution for the optimal transmit
beamforming was provided.

Consider PA-s, in which there are multiple PUs each having a single
receive antenna. To the authors' best knowledge, no closed-form
solution exists for such a case. Nevertheless, due to its convexity,
this problem can be solved via a standard interior point algorithm.
By using a similar method to that in \cite{Liang:jstsp}, it can be
shown that the optimal transmit covariance matrix for PC-s is also a
rank-one matrix. In contrast, for PB-s, due to its non-convexity,
there is no existing method in the literature to solve this problem.
However, since PB-s is a special case of $\mathbf{(PB)}$, we can
apply Algorithm 1 to efficiently solve this problem .

Next, by exploiting the special structure of PB-s, we provide a
simplified algorithm over Algorithm 1. First, we rewrite PB-s as
\begin{align}
\begin{split}
\text{(PB-s)}:~~~\max_{\sbS}\min_{i}~\hat{F}_i(\bS):=&\frac{1+\bh_s^H\bS\bh_s}{1+(\bh_i^H\bS\bh_i)/\sigma_i^2}\\
\text{subject to:}~&\text{tr}(\bS)\le P
\end{split}
\end{align}
where the $N\times 1$ vector $\bh_s$ denotes the channel from SC-Tx
to the single antenna SC-Rx. Unlike the general case of
$\mathbf{(PB)}$ where only its transformed problem in
\eqref{prob:frac} is a quasi-concave problem with respect to
$\Gamma_i$s, PB-s itself is a quasi-concave problem with respect to
$\bS$ due to the following proposition.
\begin{theorem}\label{thm:1e}
$\hat{F}_i(\bS)$ is a quasi-concave function for $i=1,\dots,K$.
\end{theorem}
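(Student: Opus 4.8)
The plan is to prove quasi-concavity directly from its defining property: a function is quasi-concave if and only if all of its superlevel sets are convex. The key structural observation is that $\hat{F}_i(\bS)$ is a ratio of two \emph{affine} functions of $\bS$. Indeed, writing the quadratic forms as traces, the numerator equals $1+\text{tr}(\bS\bh_s\bh_s^H)$ and the denominator equals $1+\text{tr}(\bS\bh_i\bh_i^H)/\sigma_i^2$, both of which are affine (in fact linear-plus-constant) in the matrix variable $\bS$. I will denote these by $f_i(\bS):=1+\bh_s^H\bS\bh_s$ and $h_i(\bS):=1+(\bh_i^H\bS\bh_i)/\sigma_i^2$, so that $\hat{F}_i(\bS)=f_i(\bS)/h_i(\bS)$.

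First I would establish that the denominator is strictly positive on the relevant domain. Since the optimization in (PB-s) is over positive semi-definite $\bS$ (with $\text{tr}(\bS)\le P$), we have $\bh_i^H\bS\bh_i\ge 0$, hence $h_i(\bS)\ge 1>0$ throughout the feasible set; similarly $f_i(\bS)\ge 1>0$. With the denominator bounded away from zero, the ratio is well defined and continuous. Next I would fix an arbitrary level $\alpha$ and analyze the superlevel set $\{\bS\succeq\bzero : \hat{F}_i(\bS)\ge\alpha\}$. Because $h_i(\bS)>0$, the inequality $f_i(\bS)/h_i(\bS)\ge\alpha$ is equivalent to $f_i(\bS)-\alpha\,h_i(\bS)\ge 0$. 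The crucial point is that $\bS\mapsto f_i(\bS)-\alpha\,h_i(\bS)$ is itself affine in $\bS$, so this condition defines a half-space; intersecting it with the convex cone of positive semi-definite matrices (and the linear constraint $\text{tr}(\bS)\le P$) yields a convex set. Since every superlevel set is convex, $\hat{F}_i(\bS)$ is quasi-concave, completing the proof for each $i=1,\dots,K$.

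This argument is essentially routine and I do not anticipate a genuine obstacle; the only points requiring care are bookkeeping ones. The first is to justify rewriting $\bh_s^H\bS\bh_s$ and $\bh_i^H\bS\bh_i$ as linear functionals of $\bS$ so that affinity is manifest, and the second is to make sure the domain is pinned down explicitly so that the denominator positivity (which is what lets me clear the denominator while preserving the inequality direction) holds everywhere it is needed. It is worth remarking that the same reasoning shows the sublevel sets are convex as well, so $\hat{F}_i$ is in fact quasi-linear; however, only quasi-concavity is required here, since it is quasi-concavity that underlies the tractability of PB-s via the subsequent algorithm.
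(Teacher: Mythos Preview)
Your proposal is correct and follows essentially the same approach as the paper: both arguments verify quasi-concavity by showing that each superlevel set $\{\bS:\hat{F}_i(\bS)\ge\alpha\}$ reduces, after clearing the positive affine denominator, to the affine constraint $1+\bh_s^H\bS\bh_s\ge\alpha\bigl(1+(\bh_i^H\bS\bh_i)/\sigma_i^2\bigr)$, which defines a convex set. Your write-up is in fact more careful than the paper's in justifying denominator positivity and the affinity of the quadratic forms via the trace representation.
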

\begin{proof}
Please refer to Appendix \ref{app:F}.
\end{proof}
Thus, PB-s can be transformed into the following equivalent problem
\begin{align}\label{prob:P6}
\begin{split}
\max_{\sbS,t}~&t \\
\text{subject to:}~&\text{tr}(\bS)\le P \\
&1+\bh_s^H\bS\bh_s\ge
t\Big(1+\frac{\bh_i^H\bS\bh_i}{\sigma_i^2}\Big),i=1,\cdots,K
\end{split}
\end{align}
where $t$ is a positive variable. For the fixed $t$, all the
constraints in the above problem are linear matrix inequalities over
$\bS$, and thus the corresponding feasibility problem (similarly
defined as \eqref{prob:feas}) can be viewed as a semi-definite
programming (SDP) feasibility problem. Correspondingly, the optimal
value of $t$ can be obtained by a bisection search algorithm.

Compared with Algorithm 1, the algorithm for Problem \eqref{prob:P6}
is much simpler, since the SDP feasibility problem can be solved via
high-efficiency interior point methods, while the feasibility
problem \eqref{prob:feas} in Algorithm 1 can only be solved through
a general gradient-based algorithm. Moreover, according to
Proposition \ref{thm:1}, we can find a set of parameters $\Gamma_i$s
such that the corresponding PA-s has the same solution of PB-s.
Since the optimal solution of PA-s is known to be a rank-one matrix
\cite{Liang:jstsp}, so is the optimal solution for PB-s.

\section{Multi-antenna Eavesdropper Receiver}\label{sect:m_ante}

In this section, we extend our results to the case with
multi-antenna eavesdroppers. We assume that each eavesdropper is
equipped with $N_e$ receive antennas, and the channel from SC-Tx to
the $i$th eavesdropper receiver is denoted by $\bH_i$ of size
$N\times N_e$. Similar to $\mathbf{(PB)}$, the MIMO secrecy capacity
in the multi-antenna eavesdropper case can be obtained from the
following optimization problem \cite{hassibi:MIMOscr07}
\begin{align}
\mathbf{(PE)}:~~~\max_{\sbS}\min_{i}~&\log|\bI+\bH_s^H\bS\bH_s|-\log|\bI+\bH_i^H\bS\bH_i|\label{eq:mrate}\\
\text{subject to:}~&\text{tr}(\bS)\le P.
\end{align}
To the best knowledge of the authors, there is no existing solution
in the literature for the above problem. In the following, we derive
the upper and lower bounds on the MIMO secrecy capacity in the
multi-antenna eavesdropper case based on the relationship between
the secrecy transmission and the CR transmission.

\subsection{Capacity Lower Bound}

First, we have the following lemma:

\begin{Lemma}\label{lemma:matrix inequality}
If for any $i, i\in\{1,\cdots.K\}$, $\text{tr}(\bH_i^H\bS\bH_i)\le
\Gamma_i$, we have
$|\bI+\bH_i^H\bS\bH_i|\leq(1+\frac{\Gamma_i}{L})^{L}$, where
$L=\min(N_e,N)$.
\end{Lemma}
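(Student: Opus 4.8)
The plan is to reduce the claim to a statement about the eigenvalues of the single matrix $\bA := \bH_i^H\bS\bH_i$, and then close it with the arithmetic--geometric--mean (AM-GM) inequality. First I would note that since $\bS$ is a (positive semi-definite) covariance matrix, the $N_e\times N_e$ matrix $\bA$ is Hermitian and positive semi-definite, because $\bv^H\bA\bv=(\bH_i\bv)^H\bS(\bH_i\bv)\ge 0$ for every $\bv$. Hence $\bA$ has real non-negative eigenvalues $\lambda_1,\ldots,\lambda_{N_e}$, the determinant factorizes as $|\bI+\bA|=\prod_{j=1}^{N_e}(1+\lambda_j)$, and the hypothesis $\text{tr}(\bH_i^H\bS\bH_i)\le\Gamma_i$ reads $\sum_{j=1}^{N_e}\lambda_j\le\Gamma_i$.

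The crucial structural fact --- and the step I would be most careful about --- is the rank bound. Because $\bH_i$ has size $N\times N_e$, its rank is at most $\min(N,N_e)=L$, so $\text{rank}(\bA)=\text{rank}(\bH_i^H\bS\bH_i)\le\text{rank}(\bH_i)\le L$. Consequently at most $L$ of the eigenvalues $\lambda_j$ are strictly positive, the remaining ones being zero and each contributing a factor $1$ to the product. This is precisely what upgrades the naive exponent $N_e$ to the sharper $L$: applying AM-GM to all $N_e$ eigenvalues would only yield $(1+\Gamma_i/N_e)^{N_e}$, which is weaker since $(1+x/n)^n$ is increasing in $n$ for fixed $x>0$. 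So the rank argument is what makes the stated bound tight rather than merely true.

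With the rank bound in hand, I would collect the (at most $L$) positive eigenvalues, pad them with zeros up to exactly $L$ terms --- legitimate since $N_e\ge L$ supplies enough zero eigenvalues --- and apply AM-GM to the quantities $1+\lambda_j$:
\begin{align}
\Big(\prod_{j=1}^{L}(1+\lambda_j)\Big)^{1/L}\le\frac{1}{L}\sum_{j=1}^{L}(1+\lambda_j)=1+\frac{1}{L}\sum_{j=1}^{L}\lambda_j\le 1+\frac{\Gamma_i}{L}.
\end{align}
The padded product equals $|\bI+\bA|$ because the extra zero eigenvalues contribute only factors of $1$, and the sum is unchanged for the same reason; raising both sides to the $L$th power then gives $|\bI+\bH_i^H\bS\bH_i|\le(1+\Gamma_i/L)^L$, as claimed. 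Equivalently, I could phrase the final step as maximizing $\prod_j(1+\lambda_j)$ over the region $\{\lambda_j\ge 0,\ \sum_j\lambda_j\le\Gamma_i\}$ with at most $L$ free variables, and note by concavity of $\log(1+x)$ together with symmetry that the maximum is attained at the equal allocation $\lambda_j=\Gamma_i/L$; both routes deliver the same bound.
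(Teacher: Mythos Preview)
Your proof is correct and follows essentially the same route as the paper: diagonalize $\bH_i^H\bS\bH_i$, translate the trace hypothesis into a constraint on the eigenvalues, and maximize $\prod_j(1+\lambda_j)$ subject to $\sum_j\lambda_j\le\Gamma_i$. You are in fact more careful than the paper, which jumps directly to the bound $(1+\Gamma_i/L)^L$ without spelling out the rank argument that justifies the exponent $L$ rather than $N_e$; your explicit observation that $\text{rank}(\bH_i^H\bS\bH_i)\le L$ and the AM-GM computation fill that gap cleanly.
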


\begin{proof}
Please refer to Appendix \ref{app:G}.
\end{proof}

Similar to Proposition \ref{thm:quansicnv}, from Lemma
\ref{lemma:matrix inequality}, the following proposition holds:
\begin{theorem}\label{thm:lowbound}
The optimal value of $\mathbf{(PE)}$ is lower-bounded by that of the
following optimization problem
\begin{align}\label{prob:frac5}
\max_{\Gamma_1,\cdots,\Gamma_K}~\min_{i}~\tilde{F}_i(\Gamma_1,\cdots,\Gamma_K):=\frac{\tilde{g}(\Gamma_1,\cdots,\Gamma_K)}{\Big(1+\frac{\Gamma_i}{L}\Big)^{L}}
\end{align}
where the function $\tilde{g}(\Gamma_1,\cdots,\Gamma_K)$ is defined
as
\begin{align}\label{def:gtilde}
\begin{split}
\tilde{g}(\Gamma_1,\cdots,\Gamma_K):=\max_{\sbS}&|\bI+\bH_s^H\bS\bH_s|\\
\text{subject to:}~&\text{tr}(\bS)\le P \\
&\text{tr}(\bH_i^H\bS\bH_i)\le \Gamma_i,~i=1,\cdots,K.
\end{split}
\end{align}
\end{theorem}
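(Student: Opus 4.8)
The plan is to show that every admissible choice of the auxiliary parameters $\Gamma_1,\cdots,\Gamma_K$ in \eqref{prob:frac5} produces a single covariance matrix that is feasible for $\mathbf{(PE)}$ and whose objective value in $\mathbf{(PE)}$ is no smaller than $\min_i\tilde{F}_i(\Gamma_1,\cdots,\Gamma_K)$; maximizing over $\Gamma_1,\cdots,\Gamma_K$ then delivers the bound. First I would pass $\mathbf{(PE)}$ to its exponentiated (ratio) form: since $\log(\cdot)$ is strictly increasing it commutes with both $\min_i$ and $\max_{\sbS}$, so the optimal value of $\mathbf{(PE)}$ equals $\log R^*$, where
$$R^*=\max_{\text{tr}(\bS)\le P}\,\min_i\frac{|\bI+\bH_s^H\bS\bH_s|}{|\bI+\bH_i^H\bS\bH_i|}.$$
The objective of \eqref{prob:frac5} lives naturally in this same ratio domain, so it suffices to prove $R^*\ge\max_{\Gamma_1,\cdots,\Gamma_K}\min_i\tilde{F}_i$ and then apply $\log$.

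Next, fix any $\Gamma_1,\cdots,\Gamma_K\ge 0$ and let $\bS^{\star}$ be an optimal solution of the embedded problem \eqref{def:gtilde} defining $\tilde{g}(\Gamma_1,\cdots,\Gamma_K)$. By construction $\bS^{\star}$ satisfies $\text{tr}(\bS^{\star})\le P$, so it is feasible for $\mathbf{(PE)}$, and it simultaneously obeys $\text{tr}(\bH_i^H\bS^{\star}\bH_i)\le\Gamma_i$ for every $i$. For each $i$ I would invoke Lemma \ref{lemma:matrix inequality} with this common $\bS^{\star}$ to obtain $|\bI+\bH_i^H\bS^{\star}\bH_i|\le(1+\Gamma_i/L)^{L}$. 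Since $|\bI+\bH_s^H\bS^{\star}\bH_s|=\tilde{g}(\Gamma_1,\cdots,\Gamma_K)$ by optimality of $\bS^{\star}$ in \eqref{def:gtilde}, and the denominator bound is strictly positive, it follows for each $i$ that
$$\frac{|\bI+\bH_s^H\bS^{\star}\bH_s|}{|\bI+\bH_i^H\bS^{\star}\bH_i|}\ge\frac{\tilde{g}(\Gamma_1,\cdots,\Gamma_K)}{(1+\Gamma_i/L)^{L}}=\tilde{F}_i(\Gamma_1,\cdots,\Gamma_K),$$
and taking the minimum over $i$ on both sides preserves the inequality.

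Finally, because $\bS^{\star}$ is feasible for the ratio form of $\mathbf{(PE)}$, its value is at most $R^*$; combined with the previous step this gives $R^*\ge\min_i\tilde{F}_i(\Gamma_1,\cdots,\Gamma_K)$. As the left-hand side is independent of $\Gamma_1,\cdots,\Gamma_K$, I would maximize the right-hand side over these parameters to conclude $R^*\ge\max_{\Gamma_1,\cdots,\Gamma_K}\min_i\tilde{F}_i$, so that $R^*$ dominates the optimal value of \eqref{prob:frac5}. Applying the monotone $\log$ transfers this to the secrecy-rate objective of $\mathbf{(PE)}$, establishing the claimed lower bound.

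The argument is structurally short, so the point requiring the most care is that the \emph{same} maximizer $\bS^{\star}$ of \eqref{def:gtilde} may be used across all $K$ indices of the inner minimization: this is precisely why the constraints in \eqref{def:gtilde} are imposed jointly rather than one index at a time, and it is what permits Lemma \ref{lemma:matrix inequality} to bound every denominator at once. A secondary point worth spelling out is the $\log$/ratio correspondence identifying the optimal value of $\mathbf{(PE)}$ with $\log R^*$, which guarantees that a bound derived in the ratio domain is a genuine lower bound on the original secrecy capacity. No tightness is asserted, consistent with the relaxation introduced by Lemma \ref{lemma:matrix inequality}.
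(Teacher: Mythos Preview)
Your proposal is correct and follows essentially the same route the paper indicates: the paper gives no detailed proof for Proposition~\ref{thm:lowbound}, merely stating that it holds ``similar to Proposition~\ref{thm:quansicnv}, from Lemma~\ref{lemma:matrix inequality},'' and your argument is exactly the natural unpacking of that hint---fix $\Gamma_1,\ldots,\Gamma_K$, take the inner maximizer $\bS^\star$ of \eqref{def:gtilde}, bound each eavesdropper determinant via Lemma~\ref{lemma:matrix inequality}, and then optimize. Your explicit remark that the same $\bS^\star$ must serve all $K$ indices simultaneously, and your handling of the $\log$/ratio correspondence, make precise what the paper leaves implicit.
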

Problem (\ref{prob:frac5}) can be solved by the gradient-based
method similar to Algorithm 1. Accordingly, the lower bound on the
MIMO secrecy capacity is obtained. Note that this capacity lower
bound is tight when $N_e=1$ and thus $L=1$.

\subsection{Capacity Upper Bound}

In the multi-antenna eavesdropper case, the signals received at
different antennas of each eavesdropper are jointly processed to
decode the contained secrecy message. Therefore, a straightforward
upper bound on the secrecy capacity in this case is obtained by
assuming that the signals at different antennas of each eavesdropper
are decoded independently. Suppose that $\bh_{i,j}$ is the $j$th
column of the matrix $\bH_i, j=1,\cdots,N_e$, then the upper bound
on the secrecy capacity can be obtained as
\begin{align}
\begin{split}
\max_{\sbS}~\min_{\{i,j\}}~&\log|\bI+\bH_s^H\bS\bH_s|-\log\Big(1+\frac{\bh_{i,j}^H\bS\bh_{i,j}}{\sigma_{i,j}^2}\Big)\\
\text{subject to:}~&\text{tr}(\bS)\le P.
\end{split}
\end{align}
The above problem is the same as $\mathbf{(PB)}$ with the number of
single-antenna eavesdroppers equal to  $N_eK$, and thus can be
solved by Algorithm 1.

\section{Numerical Examples}\label{sect:sim}

In this section, we provide several numerical examples to illustrate
the effectiveness of the proposed algorithms in computing the
secrecy channel capacity under different system settings. For all
examples, we consider a MIMO secrecy channel with $M=N=4$. The
elements of the matrix $\bH_s$ and the vectors $\bh_i$s (or the
matrices $\bH_i$s in the multi-antenna eavesdropper case) are
assumed to be independent CSCG random variables of zero mean and
unit variance. Moreover, the noise power at each eavesdropper
antenna is chosen to be one, and the transmit power of the secrecy
transmitter, $P$, is defined in dB relative to the noise power.

\subsection{MIMO Secrecy Capacity with Two Single-Antenna Eavesdroppers}

In this example, we consider a MIMO secrecy channel with $K=2$
single-antenna eavesdroppers. Fig. \ref{fig:comp2met} plots the
secrecy capacity of this channel obtained by Algorithm 1, where the
transmit power ranges from 0 dB to 10 dB. Moreover, a reference
achievable secrecy rate of this channel is obtained by the
Projected-Channel SVD (P-SVD) algorithm in \cite{Liang:jstsp}. In
this algorithm, the channel $\bH_s$ is projected into a space, which
is orthogonal to $\bh_1$ and $\bh_2$, and thus the secrecy signals
cannot be received by the eavesdroppers. It is easy to observe from
Fig. \ref{fig:comp2met} that the secrecy rate obtained by P-SVD is
less than the secrecy capacity obtained by Algorithm 1. Moreover,
from Proposition \ref{thm:quscnv1}, it is known that the function
$F_i(\Gamma_1,\Gamma_2)$ is a quasi-concave function, and thus the
function $\min_{i=1,2}~F_i(\Gamma_1,\Gamma_2)$ is also a
quasi-concave function. In Fig. \ref{fig:mesh}, we plot the value of
this function for $P=5$ dB. It is observed that this function is
indeed quasi-concave.

\subsection{MIMO Secrecy Capacity with One Single-Antenna Eavesdropper}

In this example, we apply Algorithm 2 to compute the secrecy
capacity of a MIMO channel with one single-antenna eavesdropper. As
shown in Fig. \ref{fig:comp2met1}, the secrecy capacity obtained by
Algorithm 2 is larger than the achievable secrecy rate obtained by
the P-SVD algorithm. Moreover, it is verified that function
$F(\Gamma)$ defined in (\ref{prob:frac2}) is indeed quasi-concave in
Fig. \ref{fig:mesh1} for $P=5$ dB.

\subsection{MIMO Secrecy Capacity with One Multi-antenna Eavesdropper}

In this example, by applying the methods discussed in Section
\ref{sect:m_ante}, we show in Fig. \ref{fig:lowbnd} the lower and
upper bounds on the MIMO channel secrecy capacity with a single
eavesdropper using $N_e=2$ receive antennas. From the capacity lower
bound, we obtain a feasible transmit covariance matrix and thus a
corresponding achievable secrecy rate, shown in Fig.
\ref{fig:lowbnd} and named as ``Achievable Secrecy Rate''. Moreover,
the achievable secrecy rate by the P-SVD algorithm is also shown for
comparison.

\section{Conclusion}\label{sect:cncl}

In this paper, we have disclosed the relationship between the
multi-antenna CR transmission problem and the multi-antenna secrecy
transmission problem. By exploiting this relationship, we have
transformed the non-convex secrecy capacity computation problem into
a quasi-convex optimization problem, and developed various
algorithms to obtain the optimal solution for different cases of
secrecy channels. Although the proposed method cannot obtain the
exact secrecy capacity for the more complicated multi-antenna
eavesdropper case, it can be applied to compute the upper and lower
capacity bounds.

\def\appref#1{Appendix~\ref{#1}}
\appendix
\renewcommand{\thesubsection}{\Alph{subsection}}
\makeatletter
\renewcommand{\subsection}{%
\@startsection {subsection}{2}{\z@ }{2.0ex plus .5ex minus .2ex}%
{-1.0ex plus .2ex}{\it }} \makeatother

\subsection{Proof of Proposition \ref{thm:1}}\label{app:A}

Proposition \ref{thm:1} can be proved by contradiction. For the
fixed $\sigma_i$s, suppose that the optimal solution of
$\mathbf{(PB)}$ is $\bS_o$. Define
$\bar{\Gamma}_i=\bh_i^H\bS_o\bh_i, i=1,\ldots,K$. If the optimal
solution of $\mathbf{(PA)}$ with $\Gamma_i=\bar{\Gamma}_i$, $\forall
i$, denoted by $\bar{\bS}_o$, satisfies
$\log|\bI+\bH_s^H\bar{\bS}_o\bH_s|>\log|\bI+\bH_s^H\bS_o\bH_s|$,
then $\bar{\bS}_o$ is a better solution for $\mathbf{(PB)}$ than
$\bS_o$, which contradicts the preassumption that $\bS_o$ is the
optimal solution of $\mathbf{(PB)}$. Then there must be
$\log|\bI+\bH_s^H\bar{\bS}_o\bH_s|\le\log|\bI+\bH_s^H\bS_o\bH_s|$,
which means that $\bS_o$ is also the optimal solution of
$\mathbf{(PA)}$, with $\Gamma_i=\bh_i^H\bS_o\bh_i, i=1,\ldots,K$.
Proposition \ref{thm:1} thus follows.

\subsection{Proof of Proposition \ref{thm:quansicnv}}\label{app:B}

It is easy to observe that $\mathbf{(PB)}$ can be re-expressed as
\begin{align}\label{prob:P2trans}
\begin{split}
\max_{\sbS}~\min_i~&\frac{|\bI+\bH_s^H\bS\bH_s|}{1+\bh_i^H\bS\bh_i/\sigma_i^2}\\
\text{subject to:}~&\text{tr}(\bS)\le P.
\end{split}
\end{align}
Suppose that $\bS_o$ is the optimal solution of Problem
\eqref{prob:P2trans} and $\mathbf{(PB)}$. Define
$T_o:=|\bI+\bH_s^H\bS_o\bH_s|$ and
$\bar{\Gamma}_i:=\bh_i^H\bS_o\bh_i, i=1,\cdots,K$, then the optimal
objective value of Problem \eqref{prob:P2trans} is
$\bar{F}=\text{min}\Big(T_o/(1+\bar{\Gamma}_1),\cdots,T_o/(1+\bar{\Gamma}_K)\Big)$.

Suppose that the optimal solution $\bar{\bS}_o$  of Problem
\eqref{def:g} with $\Gamma_i=\bar{\Gamma}_i$, $\forall i$, satisfies
$|\bI+\bH_s^H\bar{\bS}_o\bH_s|>T_o$, then $\bar{\bS}_o$ is a better
solution for Problem \eqref{prob:P2trans} than $\bS_o$, which
contradicts the preassumption that $\bS_o$ is the optimal solution
of Problem \eqref{prob:P2trans}. Therefore, we have
$T_o=g(\bar{\Gamma}_1,\cdots,\bar{\Gamma}_K)$. Thus, $\bar{F}$ is
achievable for Problem \eqref{prob:frac} with the particular choice
of $\Gamma_i=\bar{\Gamma}_i$, $\forall i$.

Furthermore, suppose that $\tilde{\Gamma}_i$s are the optimal
solutions of Problem \eqref{prob:frac}, and the corresponding
optimal objective value is $\tilde{F}$. For Problem \eqref{def:g}
with $\Gamma_i=\tilde{\Gamma}_i$, suppose that the optimal solution
is $\tilde{\bS}$. We can prove that $\tilde{F}\le \bar{F}$ by
contradiction: If $\tilde{F}> \bar{F}$, $\tilde{\bS}$ is a better
solution for Problem \eqref{prob:P2trans} than $\bS_o$, which
contradicts the preassumption that $\bS_o$ is the optimal solution
of Problem \eqref{prob:P2trans}. As such, we see that $\bar{F}$ is
not only achievable for Problem \eqref{prob:frac}, but also the
optimal value of Problem \eqref{prob:frac} with the optimal
solutions given as $\tilde{\bS}=\bS_o$ and
$\tilde{\Gamma}_i=\bh_i^H\bS_o\bh_i$, $\forall i$ (Note that $\bS$
is a hidden design variable for Problem \eqref{prob:frac}.).
Proposition \ref{thm:quansicnv} thus follows.

\subsection{Proof of Proposition \ref{thm:gcnv}}\label{app:C}
We first study several important properties of Problem \eqref{def:g}
that is known to be an equivalent problem of $\mathbf{(PA)}$.
Considering $\mathbf{(PA)}$ first, its Lagrangian function can be
written as
\begin{align}
L_1(\bS,\lambda,\{\mu_i\})=\log|\bI+\bH_s^H\bS\bH_s|-\lambda(\text{tr}(\bS)-P)-\sum_{i=1}^{K}\mu_i(\bh_i^H\bS\bh_i-\Gamma_i)
\end{align}
where $\lambda$ and $\mu_i$ are the non-negative Lagrange
multipliers/dual variables with respect to the transmit power
constraint and the interference power constraint at PU$_i$,
respectively. Since $\mathbf{(PA)}$ is a convex optimization
problem, the Karush-Kuhn-Tucker (KKT) conditions
\cite{Boyd_optimization_book} are both sufficient and necessary for
a solution to be optimal, and solving $\mathbf{(PA)}$ is equivalent
to solving its dual problem
\begin{align}\label{prob:L1}
\min_{\lambda,\{\mu_i\}}\max_{\sbS}L_1(\bS,\lambda,\{\mu_i\}).
\end{align}

On the other hand, the auxiliary problem \eqref{def:g} is non-convex
due to the fact that its objective function is not concave. In
general, the KKT conditions may not be sufficient for a feasible
solution to be optimal when we have a non-convex optimization
problem. However, we prove in the following lemma that this is not
the case for Problem \eqref{def:g}.

\begin{Lemma}\label{lemma:dualprb}
With Problem \eqref{def:g}, the KKT conditions are both sufficient
and necessary for a solution to be optimal.
\end{Lemma}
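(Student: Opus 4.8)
The plan is to reduce the claim for the non-convex problem \eqref{def:g} to the already-established convexity of $\mathbf{(PA)}$, using the fact that \eqref{def:g} and $\mathbf{(PA)}$ differ only by a strictly increasing (hence order-preserving) transformation of a common objective. The crux is that their KKT systems coincide up to a strictly positive rescaling of the dual variables.

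First I would record the relevant gradient identity. Writing $\phi(\bS):=|\bI+\bH_s^H\bS\bH_s|$, note that on the feasible set $\bS\succeq\bzero$ we have $\bI+\bH_s^H\bS\bH_s\succeq\bI$, so $\phi(\bS)\ge 1>0$ and $\log\phi(\bS)$ is exactly the objective of $\mathbf{(PA)}$. Differentiating, the matrix gradient of \eqref{def:g} satisfies $\nabla\phi(\bS)=\phi(\bS)\,\nabla\log\phi(\bS)=\phi(\bS)\,\bH_s(\bI+\bH_s^H\bS\bH_s)^{-1}\bH_s^H$; that is, the gradient of the non-convex objective equals the gradient of the convex $\mathbf{(PA)}$ objective multiplied by the strictly positive scalar $\phi(\bS)$.

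Second, since the two problems have exactly the same constraints, I would write the stationarity condition of \eqref{def:g} at a feasible $\bS$ with multipliers $\lambda,\{\mu_i\}$ and a positive semidefinite dual matrix $\boldsymbol{\Psi}$ for the constraint $\bS\succeq\bzero$, namely $\phi(\bS)\,\bH_s(\bI+\bH_s^H\bS\bH_s)^{-1}\bH_s^H-\lambda\bI-\sum_{i=1}^K\mu_i\bh_i\bh_i^H+\boldsymbol{\Psi}=\bzero$, together with feasibility, nonnegativity, and complementary slackness. Because the only difference from the stationarity condition of $\mathbf{(PA)}$ is the positive factor $\phi(\bS)$ multiplying the objective gradient, a triple $(\lambda,\{\mu_i\},\boldsymbol{\Psi})$ solves the KKT system of \eqref{def:g} at $\bS$ if and only if $(\lambda/\phi(\bS),\{\mu_i/\phi(\bS)\},\boldsymbol{\Psi}/\phi(\bS))$ solves the KKT system of $\mathbf{(PA)}$ at the same $\bS$; dividing by the positive number $\phi(\bS)$ preserves the sign constraints and complementary slackness, and the feasibility conditions are untouched.

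With this equivalence of KKT systems in hand, the two directions follow quickly. For sufficiency, if a feasible $\bS$ satisfies the KKT conditions of \eqref{def:g}, the rescaling shows it satisfies those of the convex $\mathbf{(PA)}$, which are sufficient for global optimality; since \eqref{def:g} and $\mathbf{(PA)}$ share the same optimal solution set (as $\log$ is strictly increasing), $\bS$ is optimal for \eqref{def:g}. For necessity, if $\bS$ is optimal for \eqref{def:g} it is optimal for $\mathbf{(PA)}$, so the convex KKT conditions hold there and rescaling the multipliers by $\phi(\bS)>0$ yields KKT multipliers for \eqref{def:g}. The step needing the most care is this necessity direction, where KKT is only guaranteed to be necessary once a constraint qualification is verified; I would dispatch this by checking Slater's condition explicitly, e.g.\ $\bS=\epsilon\bI$ with small $\epsilon>0$ gives $\text{tr}(\bS)<P$ and $\bh_i^H\bS\bh_i<\Gamma_i$ strictly whenever $P>0$ and $\Gamma_i>0$. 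A secondary technicality is justifying the determinant-gradient identity over the PSD cone and correctly incorporating the semidefinite dual variable $\boldsymbol{\Psi}$ into stationarity, but both become routine once the positive-rescaling structure is isolated.
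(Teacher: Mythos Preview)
Your proposal is correct and takes essentially the same approach as the paper: both arguments hinge on the observation that the KKT systems of \eqref{def:g} and $\mathbf{(PA)}$ coincide up to dividing the dual variables by the strictly positive scalar $|\bI+\bH_s^H\bS\bH_s|$, so that a KKT point of one is a KKT point of the other, and the convexity of $\mathbf{(PA)}$ then supplies sufficiency. Your version is in fact slightly more careful than the paper's, since you explicitly include the semidefinite dual variable for $\bS\succeq\bzero$ and verify Slater's condition for the necessity direction, whereas the paper omits the former and simply asserts the latter.
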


\begin{proof}
The necessary part of Lemma \ref{lemma:dualprb} is obvious even for
a non-convex optimization problem \cite{Boyd_optimization_book}. The
sufficient part of Lemma \ref{lemma:dualprb} can be proved via
contradiction as follows. The Lagrangian of Problem \eqref{def:g}
can be written as
\begin{align}\label{eq:L2}
L_2(\bS,\delta,\{\gamma_i\})=|\bI+\bH_s^H\bS\bH_s|-\delta(\text{tr}(\bS)-P)-\sum_{i=1}^{K}\gamma_i(\bh_i^H\bS\bh_i-\Gamma_i)
\end{align}
where $\delta$ and $\gamma_i$ are the non-negative dual variables
with respect to the transmit power constraint and the interference
power constraint at PU$_i$, respectively. We first list the KKT
conditions of Problem \eqref{def:g} as follows:
\begin{align}
|\bI+\bH_s^H\bS\bH_s|\bH_s(\bI+\bH_s^H\bS\bH_s)^{-1}\bH_s^H&=\delta\bI+\sum_{i=1}^{K}\gamma_i\bh_i\bh_i^H\label{eq:kktpa1}\\
\delta(\text{tr}(\bS)-P)&=0\label{eq:kktpa2}\\
\gamma_i(\bh_i^H\bS\bh_i-\Gamma_i)&=0,
~i=1,\cdots,K.\label{eq:kktpa3}
\end{align}
Suppose that $\bS^{(0)}$, $\delta^{(0)}$, and $\gamma_i^{(0)}$ are a
set of primal and dual variables that satisfy the above KKT
conditions, and the corresponding optimal value of Problem
\eqref{def:g} is $C^{(0)}$.

The KKT conditions of $\mathbf{(PA)}$ are expressed as
\begin{align}
\bH_s(\bI+\bH_s^H\bS\bH_s)^{-1}\bH_s^H&=\lambda\bI+\sum_{i=1}^{K}\mu_i\bh_i\bh_i^H \\
\lambda(\text{tr}(\bS)-P)&=0 \\
\mu_i(\bh_i^H\bS\bh_i-\Gamma_i)&=0,~i=1,\cdots,K.
\end{align}
Suppose that $\bS^{(1)}$, $\lambda^{(1)}$, and $\mu_i^{(1)}$ are the
optimal primal and dual variables that satisfy the above KKT
conditions, and the corresponding optimal value of $\mathbf{(PA)}$
is $C^{(1)}$. Note that since $\mathbf{(PA)}$ is convex, the KKT
conditions are both necessary and sufficient.

If \eqref{eq:kktpa1}-\eqref{eq:kktpa3} are not sufficient such that
$\log(C^{(0)})\neq C^{(1)}$, i.e., $\bS^{(0)}\neq \bS^{(1)}$, we can
choose
\begin{align}
\bS&=\bS^{(0)} \label{eq:kktp2}\\
\lambda&=\delta^{(0)}/|\bI+\bH_s^H\bS^{(0)}\bH_s|\\
\mu_i&=\gamma_i^{(0)}/|\bI+\bH_s^H\bS^{(0)}\bH_s|, ~ i=1,\cdots,K
\label{eq:gamma}
\end{align}
for $\mathbf{(PA)}$, which clearly also satisfy the KKT conditions
of $\mathbf{(PA)}$. Given the sufficiency of the KKT conditions for
$\mathbf{(PA)}$, $\bS^{(0)}$ is also optimal for $\mathbf{(PA)}$
based on \eqref{eq:kktp2} such that $\log(C^{(0)})= C^{(1)}$, which
contradicts our assumption that $\log(C^{(0)})\neq C^{(1)}$. Lemma
\ref{lemma:dualprb} thus follows.
\end{proof}

Essentially, it is due to the equivalence between the non-convex
Problem \eqref{def:g} and the convex $\mathbf{(PA)}$ that Lemma
\ref{lemma:dualprb} holds. From Lemma \ref{lemma:dualprb}, it
follows that the duality gap between Problem \eqref{def:g} and its
dual problem, defined as
\begin{align}\label{prob:L2}
D=\min_{\delta,\{\gamma_i\}}\max_{\sbS}~L_2(\bS,\delta,\{\gamma_i\}),
\end{align}
is zero, i.e.,
$g(\Gamma_1,\cdots,\Gamma_K)=\min_{\delta,\{\gamma_i\}}\max_{\sbS}~L_2(\bS,\delta,\{\gamma_i\})$.
As such, from \eqref{eq:L2} we have
\begin{align}\label{eq:partgamma}
\frac{\partial g(\Gamma_1,\cdots,\Gamma_K)}{\partial
\Gamma_i}=\frac{\partial D}{\partial \Gamma_i}=\gamma_i^{(0)},
i=1,\cdots,K.
\end{align}
Combining \eqref{eq:gamma} and \eqref{eq:partgamma}, the latter part
of Proposition \ref{thm:gcnv} thus follows.

Now we prove the concavity of $g(\Gamma_1,\cdots,\Gamma_K)$. For the
function $g(\bq)$, where
$\bq:=[\Gamma_1,\cdots,\Gamma_K]^T\in\mathcal{R}^K_+$, its concavity
can be verified by considering an arbitrary line given by
$\bq=\bx+t\bv$, where $\bx\in \mathcal{R}^{K}_+$, $\bv\in
\mathcal{R}^K$, $t\in \mathcal{R}_+$, and $\bx+t\bv\in
\mathcal{R}^{K}_+$\cite{Boyd_optimization_book}. In the sequel, we
just need to prove that the function $g(\bx+t\bv)$ is concave with
respect to $t$. Moreover, if the $i$th IT constraint is not active
for Problem \eqref{def:g}, we have $\gamma_i=0$ from the KKT
condition such that the concavity holds. To exclude the above
trivial case, we assume that all $K$ IT constraints are active for
Problem \eqref{def:g} in the following.

Define
\begin{align}\label{def:f}
f_2(\delta,\gamma_1,\cdots,\gamma_K):=\max_{\sbS}L_2(\bS,\delta,\gamma_1,\cdots,\gamma_K)
\end{align}
as the dual function of Problem \eqref{def:g}. Let $\bs$ be the
subgradient of $f_2(\delta,\gamma_1,\cdots,\gamma_K)$. According to
the definition of subgradient, the subgradient at the point
$[\tilde{\delta},\tilde{\gamma}_1,\cdots,\tilde{\gamma}_K]$
satisfies
\begin{align}\label{def:subg}
f_2(\bar{\delta},\bar{\gamma}_1,\cdots,\bar{\gamma}_K)\geq
f_2(\tilde{\delta},\tilde{\gamma}_1,\cdots,\tilde{\gamma}_K)+([\bar{\delta},\bar{\gamma}_1,\cdots,\bar{\gamma}_K]-[\tilde{\delta},\tilde{\gamma}_1,\cdots,\tilde{\gamma}_K])\cdot
\bs,
\end{align}
where $[\bar{\delta},\bar{\gamma}_1,\cdots,\bar{\gamma}_K]$ is
another arbitrary feasible point.

\begin{Lemma}\label{lemma:muincr}
The subgradient $\bs$ of function
$f_2(\delta,\gamma_1,\cdots,\gamma_K)$ at point
$[\tilde{\delta},\tilde{\gamma}_1,\cdots,\tilde{\gamma}_K]$ is
$[P-\text{tr}(\tilde{\bS}),\Gamma_1-\bh_1^H\tilde{\bS}\bh_1,\cdots,\Gamma_K-\bh_K^H\tilde{\bS}\bh_K]$,
where $\tilde{\bS}$ is the optimal solution of Problem \eqref{def:f}
at this point.
\end{Lemma}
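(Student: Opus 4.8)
The plan is to exploit the observation that the dual function $f_2$ defined in \eqref{def:f} is a pointwise maximum, taken over the primal matrix $\bS\succeq\bzero$, of functions that are \emph{affine} in the dual variables $(\delta,\gamma_1,\cdots,\gamma_K)$. Rewriting the Lagrangian \eqref{eq:L2} as
\begin{align}
L_2(\bS,\delta,\gamma_1,\cdots,\gamma_K)=|\bI+\bH_s^H\bS\bH_s|+\delta\big(P-\text{tr}(\bS)\big)+\sum_{i=1}^{K}\gamma_i\big(\Gamma_i-\bh_i^H\bS\bh_i\big),
\end{align}
I would read off that, for each fixed $\bS$, the coefficient multiplying $\delta$ is $P-\text{tr}(\bS)$ and the coefficient multiplying $\gamma_i$ is $\Gamma_i-\bh_i^H\bS\bh_i$. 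Hence the claimed vector $\bs$ is exactly the gradient, with respect to the dual variables, of the single affine function that the maximizer $\tilde{\bS}$ selects at the base point $[\tilde{\delta},\tilde{\gamma}_1,\cdots,\tilde{\gamma}_K]$.

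To establish that this gradient is indeed a subgradient, I would verify \eqref{def:subg} directly. Fixing any other feasible dual point $[\bar{\delta},\bar{\gamma}_1,\cdots,\bar{\gamma}_K]$, the fact that $\tilde{\bS}$ is merely a feasible (not necessarily maximizing) choice of the primal variable there yields
\begin{align}
f_2(\bar{\delta},\bar{\gamma}_1,\cdots,\bar{\gamma}_K)\ge L_2(\tilde{\bS},\bar{\delta},\bar{\gamma}_1,\cdots,\bar{\gamma}_K).
\end{align}
Because $L_2(\tilde{\bS},\cdot)$ is affine in the dual variables, I would expand its right-hand side about the base point using the coefficients identified above, and invoke the optimality of $\tilde{\bS}$ at the base point, $L_2(\tilde{\bS},\tilde{\delta},\tilde{\gamma}_1,\cdots,\tilde{\gamma}_K)=f_2(\tilde{\delta},\tilde{\gamma}_1,\cdots,\tilde{\gamma}_K)$. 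Collecting the difference of the two affine evaluations produces precisely the inner product $\big([\bar{\delta},\bar{\gamma}_1,\cdots,\bar{\gamma}_K]-[\tilde{\delta},\tilde{\gamma}_1,\cdots,\tilde{\gamma}_K]\big)\cdot\bs$, so \eqref{def:subg} holds with the stated $\bs$.

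I do not expect a genuine obstacle, since this is the standard Danskin-type rule that a maximizer of a max-of-affine-functions supplies a subgradient. The only point deserving care is the attainment of the maximizer $\tilde{\bS}$; this is granted by the hypothesis of the lemma and is ultimately underwritten by the equivalence of Problem \eqref{def:g} with the convex $\mathbf{(PA)}$ together with the zero duality gap from Lemma \ref{lemma:dualprb}. I would therefore present the argument as the short, self-contained chain of inequalities above rather than appealing to the general theorem, so that the subgradient components $P-\text{tr}(\tilde{\bS})$ and $\Gamma_i-\bh_i^H\tilde{\bS}\bh_i$ emerge transparently.
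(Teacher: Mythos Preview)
Your proposal is correct and follows essentially the same route as the paper's own proof: both start from the inequality $f_2(\bar{\delta},\bar{\gamma}_1,\cdots,\bar{\gamma}_K)\ge L_2(\tilde{\bS},\bar{\delta},\bar{\gamma}_1,\cdots,\bar{\gamma}_K)$, then add and subtract the base-point evaluation $L_2(\tilde{\bS},\tilde{\delta},\tilde{\gamma}_1,\cdots,\tilde{\gamma}_K)=f_2(\tilde{\delta},\tilde{\gamma}_1,\cdots,\tilde{\gamma}_K)$ to expose the inner product with $\bs$. Your framing via the affine-in-dual structure and the Danskin viewpoint is a clean way to phrase it, but the underlying computation is identical to the paper's.
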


\begin{proof}
Let $\bar{\bS}$ be the optimal solution of Problem \eqref{def:f}
with $\delta=\bar{\delta}$ and $\gamma_i=\bar{\gamma}_i,
i=1,\cdots,K$. Thus, we have
\begin{align}
f_2(\bar{\delta},\bar{\gamma}_1,\cdots,\bar{\gamma}_K)&=\bar{r}-\bar{\delta}(\text{tr}(\bar{\bS})-P)-\sum_{i=1}^{K}\bar{\gamma}_i(\bh_i^H\bar{\bS}\bh_i-\Gamma_i)\\
&\ge
\tilde{r}-\bar{\delta}(\text{tr}(\tilde{\bS})-P)-\sum_{i=1}^K\bar{\gamma}_i(\bh_i^H\tilde{\bS}\bh_i-\Gamma_i)
\end{align}
\begin{align}
&=
\tilde{r}-\tilde{\delta}(\text{tr}(\tilde{\bS})-P)-\sum_{i=1}^K\tilde{\gamma}_i(\bh_i^H\tilde{\bS}\bh_i-\Gamma_i)+\tilde{\delta}(\text{tr}(\tilde{\bS})-P)+\sum_{i=1}^K\tilde{\gamma}_i(\bh_i^H\tilde{\bS}\bh_i-\Gamma_i) \nonumber \\
&~~~-\bar{\delta}(\text{tr}(\tilde{\bS})-P)-\sum_{i=1}^{K}\bar{\gamma}_i(\bh_i^H\tilde{\bS}\bh_i-\Gamma_i)\\
&=f_2(\tilde{\delta},\bar{\gamma}_1,\cdots,\bar{\gamma}_K)+(\text{tr}(\tilde{\bS})-P)(\tilde{\delta}-\bar{\delta})+\sum_{i=1}^K(\bh_i^H\tilde{\bS}\bh_i-\Gamma_i)(\tilde{\gamma}_i-\bar{\gamma}_i)
\label{eq:subgra}
\end{align}
where $\bar{r}=|\bI+\bH_s^H\bar{\bS}\bH_s|$ and
$\tilde{r}=|\bI+\bH_s^H\tilde{\bS}\bH_s|$. According to
\eqref{eq:subgra}, we have Lemma \ref{lemma:muincr}.
\end{proof}

According to Lemma \ref{lemma:dualprb}, Problem \eqref{def:g} is
equivalent to its dual problem
\begin{align}
\min_{\delta,\gamma_1,\cdots,\gamma_K}~f_2(\delta,\gamma_1,\cdots,\gamma_K)
\end{align}
where $f_2(\delta,\gamma_1,\cdots,\gamma_K)$ is convex. We next
consider Problem \eqref{def:g} with parameters
$P,\Gamma_1,\cdots,\Gamma_K$, denoted as Problem I. Assume that
$\bS^{(1)}$, $\delta^{(1)},\gamma_1^{(1)},\cdots,\gamma_K^{(1)}$ are
its primal and dual optimal solutions. Moreover, we have another
form of Problem \eqref{def:g} with parameters
$P,\Gamma_1+tv_1,\cdots,\Gamma_K+tv_K$, denoted as Problem II, where
$t$ is a positive constant and $v_i$ is a real constant. Assume that
$\bS^{(2)}$, $\delta^{(2)},\gamma_1^{(2)},\cdots,\gamma_K^{(2)}$ are
the primal and dual optimal solutions of Problem II. According to
\eqref{def:f}, we can write the dual function of Problem II as
\begin{align}
f_2^{\text{II}}(\delta,\gamma_1,\cdots,\gamma_K):=\max_{\sbS}~|\bI+\bH_s^H\bS\bH_s|-\delta\big(\text{tr}(\bS)-P\big)-\sum_{i=1}^{K}\gamma_i(\bh_i^H\bS\bh_i-\Gamma_i-tv_i)
\end{align}

To solve Problem II, we apply the subgradient-based algorithm to
search the minimum of its dual function
$f_2^{\text{II}}(\delta,\gamma_1,\cdots,\gamma_K)$ along the
subgradient direction. Suppose that we start from the point
$[\delta^{(1)},\gamma_1^{(1)},\cdots,\\\gamma_K^{(1)}]$. Based on
Lemma \ref{lemma:muincr}, one valid subgradient of
$f_2(\delta,\gamma_1,\cdots,\gamma_K)$ at this point is
\begin{align}
[0,\Gamma_1+tv_1-\bh_1^H\bS^{(1)}\bh_1,\cdots,\Gamma_K+tv_K-\bh_K^H\bS^{(1)}\bh_K]
=[0,tv_1,\cdots,tv_K],\label{eq:subg}
\end{align}
where \eqref{eq:subg} is due to the KKT condition of Problem I:
$\Gamma_i^{(1)}-\bh_i^H\bS^{(1)}\bh_i=0$ given $\gamma_i^{(1)}>0,
\forall i$. Moreover, according to \eqref{def:subg}, we have
\begin{align}\label{eq:subg1}
f_2^{\text{II}}(\delta^{(2)},\gamma_1^{(2)},\!\cdots\!,\gamma_K^{(2)})\ge
f_2^{\text{II}}(\delta^{(1)},\gamma_1^{(1)},\!\cdots\!,\gamma_K^{(1)})+([\delta^{(2)},\gamma_1^{(2)},\!\cdots\!,\gamma_K^{(2)}]-[\delta^{(1)},\gamma_1^{(1)},\cdots,\gamma_K^{(1)}])\cdot
\bs^{(1)},
\end{align}
where $\bs^{(1)}$ is the subgradient at the point
$[\delta^{(1)},\gamma_1^{(1)},\cdots,\gamma_K^{(1)}]$. Since
$\delta^{(2)},\gamma_1^{(2)},\cdots,\gamma_K^{(2)}$ are the dual
optimal solutions of Problem II, we have
$f_2^{\text{II}}(\delta^{(2)},\gamma_1^{(2)},\cdots,\gamma_K^{(2)})\le
f_2^{\text{II}}(\delta^{(1)},\gamma_1^{(1)},\cdots,\gamma_K^{(1)})$.
Combining this with \eqref{eq:subg} and \eqref{eq:subg1}, we have
\begin{align}
\sum_{i=1}^{K}\gamma_i^{(2)}tv_i\le
\sum_{i=1}^{K}\gamma_i^{(1)}tv_i.
\end{align}
Thus,
\begin{align}\label{ieq:1}
\sum_{i=1}^{K}\gamma_i^{(2)}v_i\le \sum_{i=1}^{K}\gamma_i^{(1)}v_i,~
\text{given}~t>0 .
\end{align}

Moreover, according to Lemma \ref{lemma:dualprb} and \eqref{eq:L2},
we have
\begin{align}\label{eq:ggra}
\frac{\partial g(\bx+t\bv)}{\partial t}=\sum_{i=1}^{K}\gamma_iv_i.
\end{align}
Note that $\gamma_i$ is the Lagrange multiplier of Problem
\eqref{def:g} with respect to the $i$th IT constraint. With a
different IT threshold, i.e., a different value of $t$, $\gamma_i$s
are not necessarily the same, and thus $\gamma_i$s can be viewed as
implicit functions of $t$. Combining \eqref{ieq:1} with
\eqref{eq:ggra}, it is easy to observe $\frac{\partial
g(\bx+t\bv)}{\partial t}$ decreases with the increase of $t$ since
the derivative change over $t$ is given as
$\sum_{i=1}^{K}\gamma_i^{(2)}v_i- \sum_{i=1}^{K}\gamma_i^{(1)}v_i\le
0$, i.e., the second order derivative of function $g(\bx+t\bv)$ over
$t$ is negative on an arbitrary line $\bx+t\bv$ in the feasible
region. Therefore, $g(\bq)$ is concave. Proposition \ref{thm:gcnv}
thus follows.

\subsection{Proof of Proposition \ref{thm:quscnv1}}\label{app:D}

The quasi-concavity is define as follows
\cite{Boyd_optimization_book}:
\begin{definition}
A function $f:\mathcal{R}^K\rightarrow\mathcal{R}$ is called
\emph{quasi-concave} if all its sublevel sets
\begin{align}
S_{\alpha}=\{\bx\in \textbf{dom}f|f(\bx)\ge\alpha\}
\end{align}
for $\alpha\in\mathcal{R}$, are convex sets.
\end{definition}
According to Proposition \ref{thm:gcnv},
$g(\Gamma_1,\cdots,\Gamma_K)$ is a concave function of $\Gamma_i$s.
Therefore, the $\alpha$-sublevel set of
$F_i(\Gamma_1,\cdots,\Gamma_K)$
\begin{align}
S_{\alpha}=\Big\{\bq\Big|\frac{g(\Gamma_1,\cdots,\Gamma_K)}{1+\Gamma_i/\sigma_i^2}\ge
\alpha\Big\}=\{\bq|g(\Gamma_1,\cdots,\Gamma_K)\ge\alpha(1+\Gamma_i/\sigma_i^2)\}
\end{align}
is a convex set for any $\alpha$, and thus the function
$F_i(\Gamma_1,\cdots,\Gamma_K)$ is a quasi-concave function. Since
the objective function of Problem \eqref{prob:frac} is the minimum
of $K$ quasi-concave functions, $F_i(\Gamma_1,\cdots,\Gamma_K)$,
$i=1,\cdots,K$, it is still quasi-concave
\cite{Boyd_optimization_book}. Proposition \ref{thm:quscnv1} thus
follows.

\subsection{Proof of Lemma \ref{lemma:eql}}\label{app:E}
The optimality condition of Problem \eqref{prob:frac2} is
\begin{align}
\frac{\partial F(\Gamma)}{\partial
\Gamma}=\frac{\gamma(\Gamma)(1+\Gamma/\sigma^2)-\frac{1}{\sigma^2}g(\Gamma)}{\big(1+\Gamma/\sigma^2\big)^2}=0.
\end{align}
Since the above optimality condition is a necessary condition for
any unconstrained smooth optimization problems regardless of its
convexity \cite{Boyd_optimization_book}, the necessary part of Lemma
\ref{lemma:eql} follows.

We next prove the sufficient part of this lemma by contradiction. We
first present a property of $\gamma(\Gamma)$ as follows.
\begin{Lemma}\label{lemma:gammadec}
$\gamma(\Gamma)$ is a non-increasing function for $\Gamma\ge0$.
\end{Lemma}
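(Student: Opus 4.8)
The plan is to obtain the monotonicity of $\gamma(\Gamma)$ as an immediate consequence of the concavity of $g$ established in Proposition \ref{thm:gcnv}. Specializing that result to the single-PU case $K=1$, the function $g(\Gamma)$ is concave on its domain $\{\Gamma\ge 0\}$, and by definition $\gamma(\Gamma)=\partial g(\Gamma)/\partial\Gamma$ is its derivative. Since the derivative of a differentiable concave function of a single real variable is non-increasing, the lemma follows directly.

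To make the argument self-contained, the key step I would invoke is the supporting-line (chord-slope) characterization of concavity: for any $\Gamma_1<\Gamma_2$ in the domain,
\begin{align}
\gamma(\Gamma_1)\ge\frac{g(\Gamma_2)-g(\Gamma_1)}{\Gamma_2-\Gamma_1}\ge\gamma(\Gamma_2),\notag
\end{align}
where the left inequality expresses that the graph of $g$ lies below its tangent at $\Gamma_1$ and the right inequality the same at $\Gamma_2$. This chain yields $\gamma(\Gamma_1)\ge\gamma(\Gamma_2)$ whenever $\Gamma_1<\Gamma_2$, which is exactly the assertion that $\gamma(\Gamma)$ is non-increasing on $\{\Gamma\ge 0\}$.

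The proof therefore amounts to little more than a one-line appeal to Proposition \ref{thm:gcnv}, so there is no substantive obstacle; the only point deserving a word of care is the well-definedness of $\gamma(\Gamma)$, i.e., the differentiability of $g$. At a possible kink of $g$ one should read $\gamma(\Gamma)$ as an element of the superdifferential, and the identical chord inequality shows that the superdifferential is order-reversing, so the stated monotonicity persists in the appropriate sense. As a cross-check, the alternative route through the multiplier formula $\gamma(\Gamma)=\mu^{(1)}|\bI+\bH_s^H\bS^{(1)}\bH_s|$ from Proposition \ref{thm:gcnv} reproduces the same conclusion but with considerably more bookkeeping, and I would not pursue it.
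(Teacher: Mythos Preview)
Your proposal is correct and aligns with the paper's own treatment: the paper omits the proof of this lemma, merely remarking that it is analogous to an earlier argument, and the substance of that earlier argument (the proof of Proposition~\ref{thm:gcnv} in Appendix~\ref{app:C}) is precisely that the derivative of $g$ along any line is non-increasing, which in the single-variable case is exactly the statement of Lemma~\ref{lemma:gammadec}. Your chord-slope packaging is a clean equivalent formulation of the same fact, so there is no meaningful divergence.
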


The proof of Lemma \ref{lemma:gammadec} is similar to that of
Proposition \ref{thm:quscnv1}, and thus is omitted here. Suppose
that there are two solutions $\Gamma^{(1)}$ and $\Gamma^{(2)}$, both
of which satisfy the condition in \eqref{eq:suffcon}. Furthermore,
without loss of generality, we assume $\Gamma^{(1)}<\Gamma^{(2)}$.
Therefore, we have
\begin{align}\label{eq:gGamma}
g(\Gamma^{(1)})>g(\Gamma^{(2)}).
\end{align}

According to \eqref{eq:partgamma}, we have
\begin{align}
\frac{g(\Gamma^{(2)})-g(\Gamma^{(1)})}{\Gamma^{(2)}-\Gamma^{(1)}}\ge
\gamma(\Gamma^{(2)}).
\end{align}
Thus,
\begin{align}
g(\Gamma^{(2)})-\Gamma^{(2)}\gamma(\Gamma^{(2)})\ge g(\Gamma^{(1)})-
\Gamma^{(1)}\gamma(\Gamma^{(2)})\ge g(\Gamma^{(1)})-
\Gamma^{(1)}\gamma(\Gamma^{(1)})\label{eq:proof1}
\end{align}
where the second inequality is due to the fact that
$\gamma(\Gamma^{(1)})\ge \gamma(\Gamma^{(2)})$ from Lemma
\ref{lemma:gammadec}.

Since both solutions satisfy the necessary condition
\eqref{eq:suffcon}, we have
\begin{align}
\gamma(\Gamma^{(1)})&=\frac{1}{\sigma^2}\Big(g(\Gamma^{(1)})-\gamma(\Gamma^{(1)})\Gamma^{(1)}\Big)\label{eq:proof2}\\
\gamma(\Gamma^{(2)})&=\frac{1}{\sigma^2}\Big(g(\Gamma^{(2)})-\gamma(\Gamma^{(2)})\Gamma^{(2)}\Big).\label{eq:proof3}
\end{align}
From \eqref{eq:proof1} and $\gamma(\Gamma^{(1)})\ge
\gamma(\Gamma^{(2)})$, it is easy to observe that \eqref{eq:proof2}
and \eqref{eq:proof3} hold simultaneously if and only if
$\gamma(\Gamma^{(1)})= \gamma(\Gamma^{(2)})=\gamma$. Thus, we have
\begin{align}
g(\Gamma^{(1)})=g(\Gamma^{(2)})-\gamma(\Gamma^{(1)}-\Gamma^{(2)}).
\end{align}
Since $\Gamma^{(1)}<\Gamma^{(2)}$ and
$g(\Gamma^{(1)})>g(\Gamma^{(2)})$, we further derive $\gamma<0$,
which contradicts the fact that the Lagrange multiplier
$\gamma\ge0$. As such the solution of \eqref{eq:suffcon} is unique,
which implies the sufficiency given the already proven necessity
part. Lemma \ref{lemma:eql} thus follows.

\subsection{Proof of Proposition \ref{thm:1e}}\label{app:F}
Similar to the proof given in Appendix \ref{app:D}, the
$\alpha$-sublevel set of $\hat{F}_i(\bS)$
\begin{align}
S_{\alpha}&=\{\bS|\frac{1+\bh_s^H\bS\bh_s}{1+(\bh_i^H\bS\bh_i)/\sigma_i^2}\ge
\alpha\}\\
&=\{\bS|1+\bh_s^H\bS\bh_s\ge
\alpha(1+(\bh_i^H\bS\bh_i)/\sigma_i^2)\}.
\end{align}
is a convex set. Thus, $\hat{F}_i(\bS)$ is a quasi-concave function.

\subsection{Proof of Lemma \ref{lemma:matrix inequality}}\label{app:G}
We have
\begin{align}\label{eq:1}
|\bI+\bH_i^H\bS\bH_i|=|\bI+\bU_i^H\mathbf{\Lambda}_i\bU_i|=|\bI+\mathbf{\Lambda}_i|
\end{align}
where $\bH_i^H\bS\bH_i:=\bU_i^H\mathbf{\Lambda}_i\bU_i$ is the
eigenvalue decomposition.  Since
$\text{tr}(\bH_i^H\bS\bH_i)=\text{tr}(\mathbf{\Lambda}_i)$, from
$\text{tr}(\bH_i^H\bS\bH_i)\le \Gamma_i$ it follows that
\begin{align}\label{eq:2}
\text{tr}(\mathbf{\Lambda}_i)\le \Gamma_i.
\end{align}
Combining \eqref{eq:1} and \eqref{eq:2} and denoting
$L=\min(N_e,N)$, we have
\begin{align}
|\bI+\bH_i^H\bS\bH_i|\le
\big|\bI+\frac{\Gamma_i}{L}\bI\big|=\big(1+\frac{\Gamma_i}{L}\big)^{L}
\end{align}
where the inequality is obtained by solving the following problem:
$\max_{\text{tr}(\mathbf{\Lambda}_i)\le\Gamma_i}|\bI+\mathbf{\Lambda}_i|$.
Lemma \ref{lemma:matrix inequality} thus follows.

\linespread{1.5}

\newpage

\linespread{1.6}

\begin{figure}
    \begin{minipage}[b]{1.0\linewidth}
  \centering
      \psfrag{hc}{$\bH_s$}
      \psfrag{h1}{$\bh_1$}
      \psfrag{h2}{$\bh_K$}
      \psfrag{PU1}{$\text{PU}_1$}
      \psfrag{PU2}{$\text{PU}_K$}
      \psfrag{SUtx}{$\text{SU-Tx}$}
      \psfrag{SUrx}{$\text{SU-Rx}$}
  \includegraphics[width = 100mm]{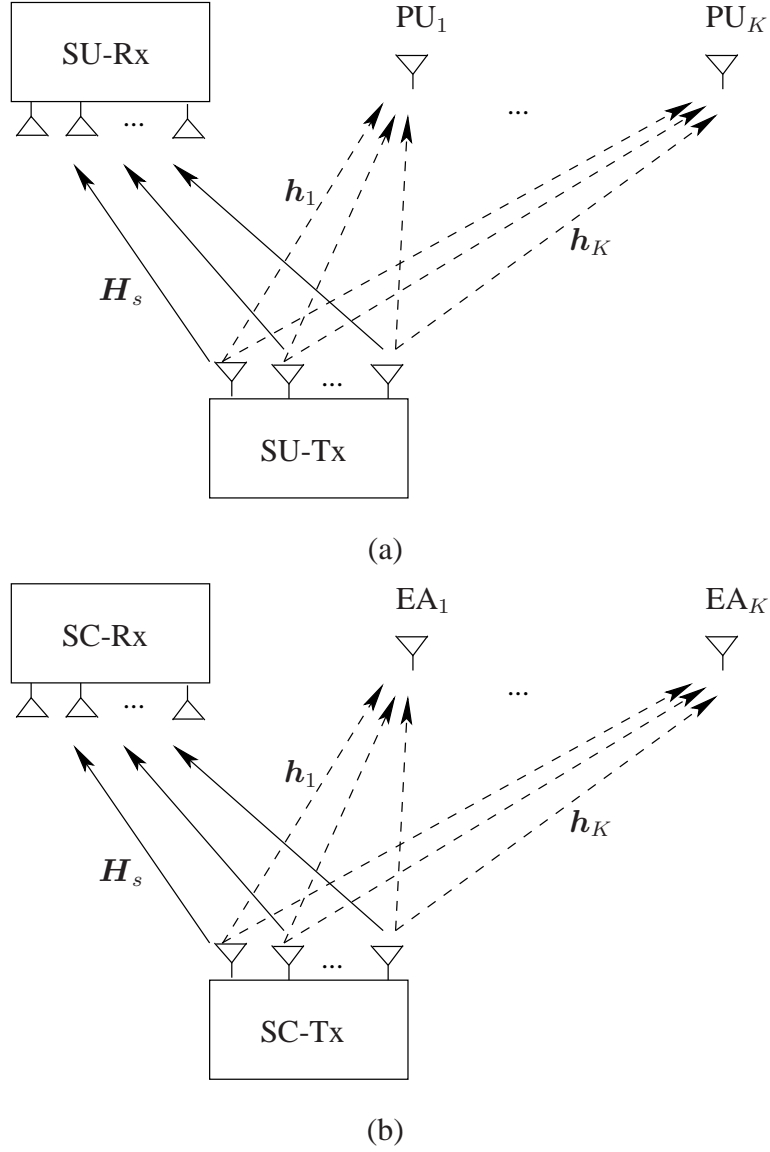}
        \centerline{(a)}\medskip
    \end{minipage}
\begin{minipage}[b]{1.0\linewidth}
  \centering
      \psfrag{hc}{$\bH_s$}
      \psfrag{h1}{$\bh_1$}
      \psfrag{h2}{$\bh_K$}
      \psfrag{PU1}{$\text{EA}_1$}
      \psfrag{PU2}{$\text{EA}_K$}
      \psfrag{SUtx}{$\text{SC-Tx}$}
      \psfrag{SUrx}{$\text{SC-Rx}$}
  \includegraphics[width = 100mm]{syscrst.eps}
        \centerline{(b)}\medskip
    \end{minipage}
     \caption{\small The comparison of two system models:
     (a) the CR MIMO channel with $K$ single-antenna PUs;
     and (b) the secrecy MIMO channel with $K$ single-antenna eavesdroppers.}
     \label{fig:sysmodel}
\end{figure}

\begin{figure}
     \centering
     \includegraphics[width = 100mm]{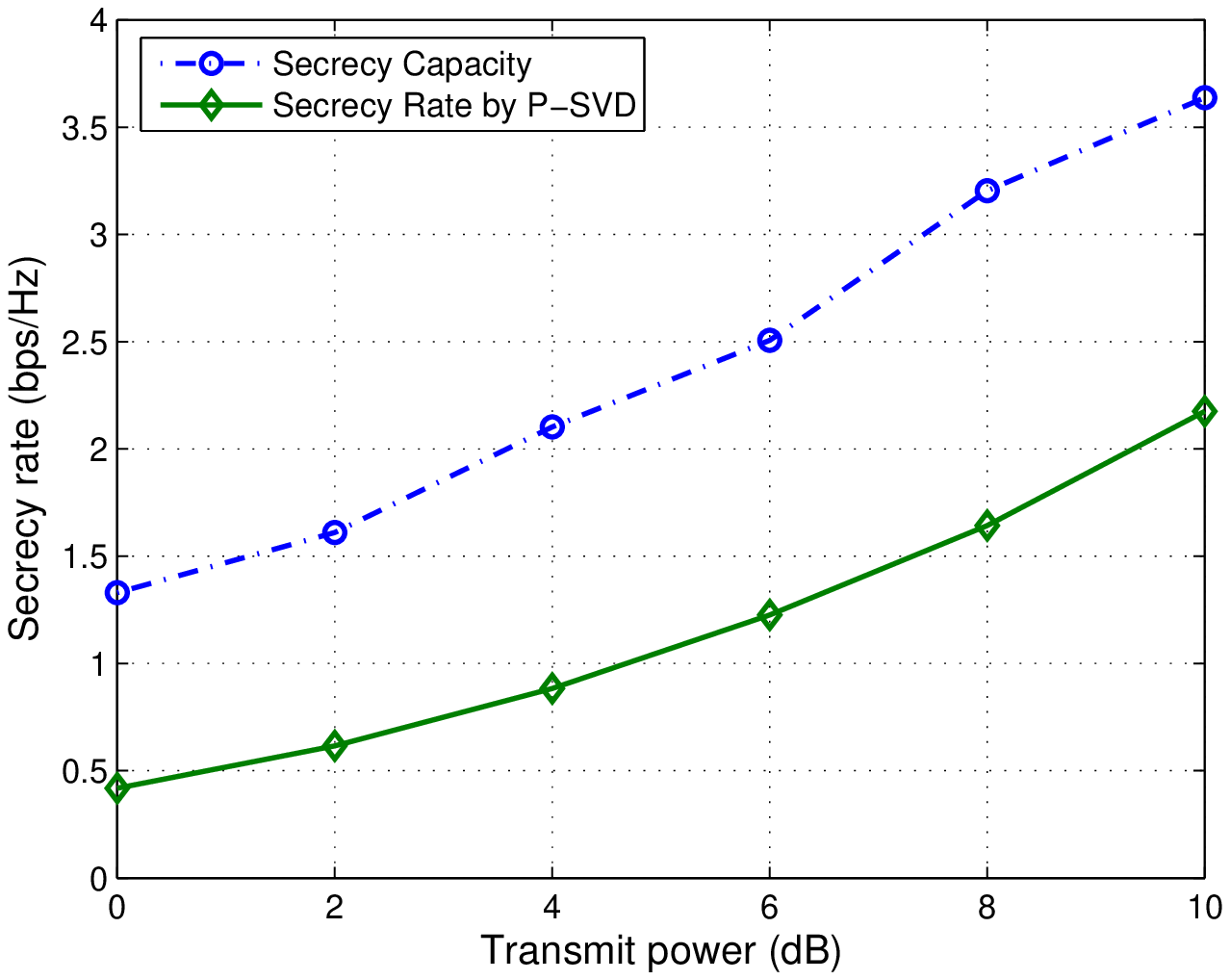}
     \caption{\small Comparison of the secrecy capacity by Algorithm 1 and the secrecy rate by the P-SVD
     algorithm in \cite{Liang:jstsp} for $M=N=4$ and $K=2$ single-antenna eavesdroppers.}
     \label{fig:comp2met}
\end{figure}

\begin{figure}
     \centering
     \includegraphics[width = 100mm]{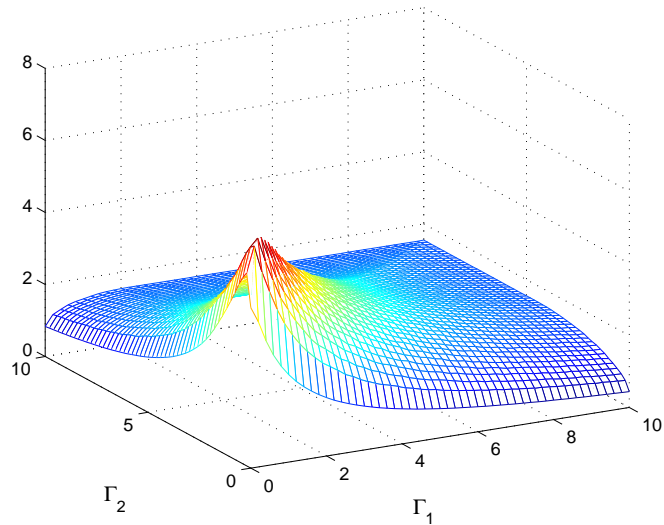}
     \caption{\small The value of the function $\min_{i=1,2}~F_i(\Gamma_1,\Gamma_2)$ for $M=N=4$, $K=2$ single-antenna eavesdroppers, and $P=5$ dB.}
     \label{fig:mesh}
\end{figure}

\begin{figure}
     \centering
     \includegraphics[width = 100mm]{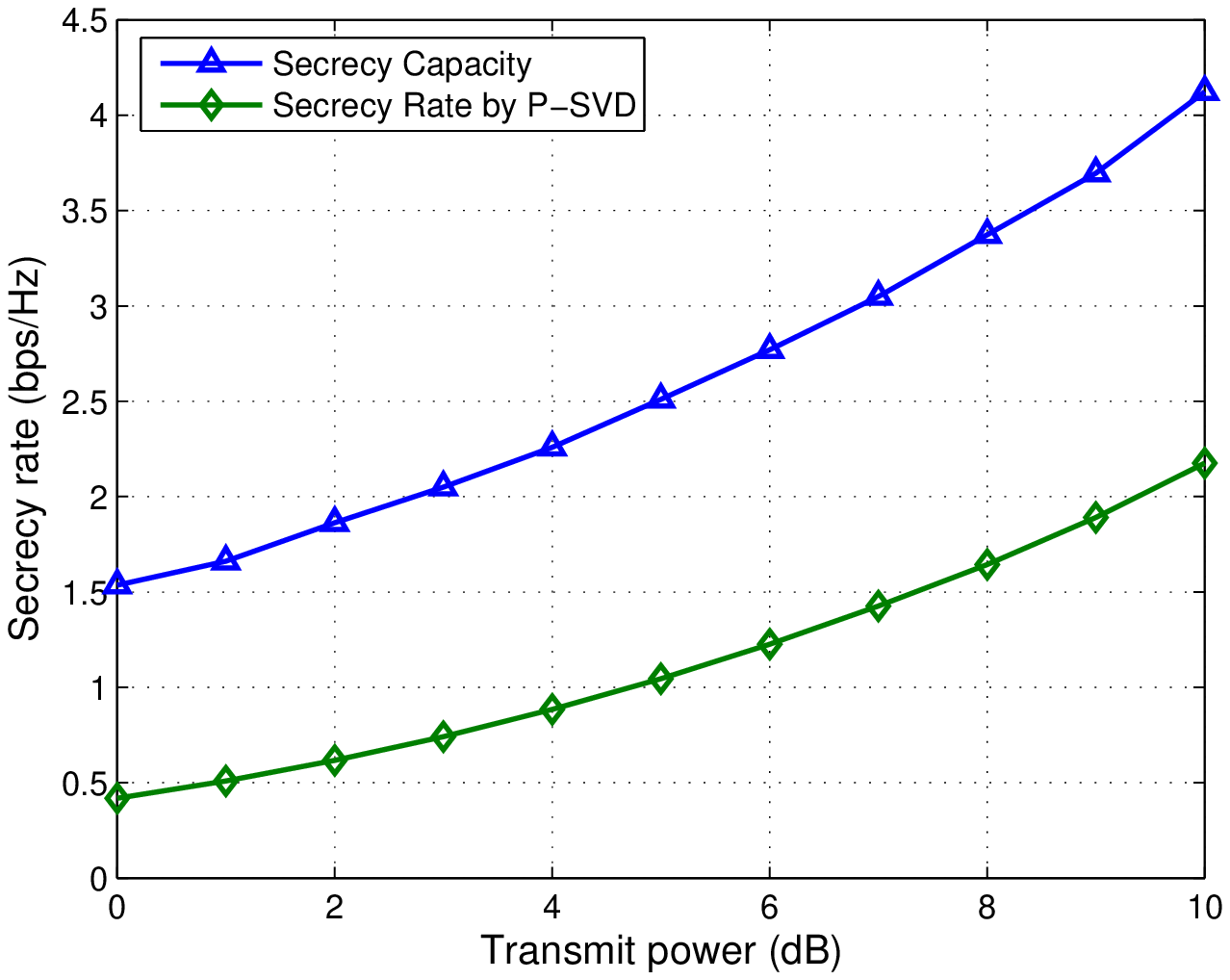}
     \caption{\small Comparison of the secrecy capacity by Algorithm 2 and the secrecy rate by the P-SVD algorithm in \cite{Liang:jstsp}
      for $M=N=4$ and $K=1$ single-antenna eavesdropper.}
     \label{fig:comp2met1}
\end{figure}

\begin{figure}
     \centering
     \includegraphics[width = 100mm]{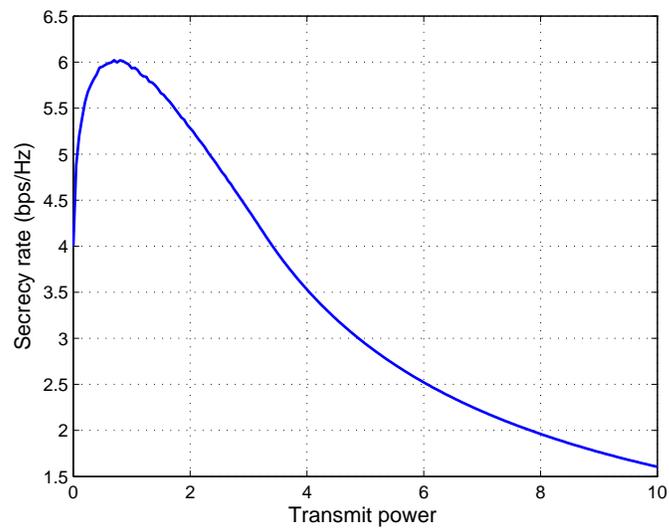}
     \caption{\small The value of the function $F(\Gamma)$ for $M=N=4$, $K=1$ single-antenna eavesdropper, and $P=5$ dB.}
     \label{fig:mesh1}
\end{figure}

\begin{figure}
     \centering
     \includegraphics[width = 100mm]{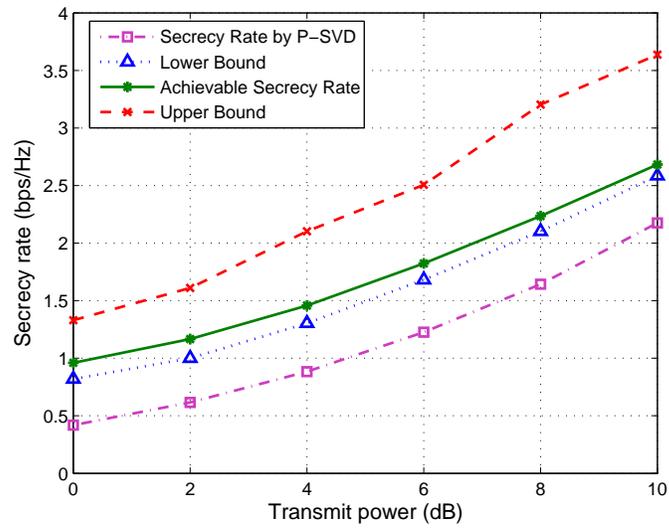}
     \caption{\small Comparison of the lower and upper bounds on the secrecy capacity and two achievable secrecy rates for $M=N=4$, $K=1$ eavesdropper
     with $N_e=2$ receive antennas.}
     \label{fig:lowbnd}
\end{figure}

\end{document}